\def\ds{\displaystyle}
\def\bea{\begin{array}{c}}
\def\ea{\end{array}}
\def\be{\begin{equation}\bea\ds}
\def\ee{\ea\end{equation}}
\def\bee{\begin{equation}\begin{array}{rcl}\ds}
\def\eee{\end{array}\end{equation}}
\DeclareMathOperator{\Tr}{Tr}
\newcommand{\q}[1]{\lq\lq#1\rq\rq}
\newcommand{\eq}[1]{(\ref{#1})}  
\newcommand{\fig}[1]{{\ref{fig:#1}}}  
\newcommand{\tab}[1]{\textbf{\ref{tab:#1}}}  
\newcommand{\Sec}[1]{\textbf{\ref{sec:#1}}}  
\newcommand{\diff}{\mathop{}\!\mathrm{d}}
\newcommand{\ii}{\textrm{i}\,}
\newcommand{\Mc}{\mathcal M}
\newcommand{\Cc}{\mathcal C}
\newcommand{\Rc}{\mathcal R}
\newcommand{\Pc}{\mathcal P}
\newcommand{\Lc}{\mathcal L}
\newcommand{\Dc}{\mathcal D}
\newcommand{\Hc}{\mathcal H}
\newcommand{\Sc}{\mathcal S}
\newcommand{\Tc}{\mathcal T}
\newcommand{\Kc}{\mathcal K}
\newcommand{\Uc}{\mathcal U}
\newcommand{\SL}{SL(2,\mathbb{Z})}
\newtheorem{proposition}{Proposition}
\title{\bf Circuit Complexity of Knot States in Chern-Simons theory}
\author[1,a]{Giancarlo Camilo,}
\author[2,a,b]{Dmitry Melnikov,}
\author[3,a]{Fábio Novaes}
\author[4,a]{and Andrea Prudenziati}
\affiliation[a]{International Institute of Physics, Universidade Federal do Rio Grande do Norte\break Campus Universit\'ario, Lagoa Nova, Natal-RN 59078-970, Brazil}
\affiliation[b]{Institute for Theoretical and Experimental Physics, B.~Cheremushkinskaya 25, Moscow 117218, Russia}
\email{gcamilo at iip.ufrn.br}
\email{dmitry at iip.ufrn.br}
\email{fnovaes at iip.ufrn.br}
\email{andrea.prudenziati at gmail.com}
\abstract{
We compute an upper bound on the circuit complexity of quantum states in $3d$ Chern-Simons theory corresponding to certain classes of knots. Specifically, we deal with states in the torus Hilbert space of Chern-Simons that are the knot complements on the $3$-sphere of arbitrary torus knots. These can be constructed from the unknot state by using the Hilbert space representation of the $S$ and $T$ modular transformations of the torus as fundamental gates. The upper bound is saturated in the semiclassical limit of Chern-Simons theory. The results are then generalized for a family of multi-component links that are obtained by \lq\lq Hopf-linking\rq\rq~different torus knots. We also use the braid word presentation of knots to discuss states on the punctured sphere Hilbert space associated with 2-bridge knots and links. The calculations present interesting number theoretic features related with continued fraction representations of rational numbers. In particular, we show that the minimization procedure defining the complexity naturally leads to regular continued fractions, allowing a geometric interpretation of the results in the Farey tesselation of the upper-half plane. Finally, we relate our discussion to the  framework of path integral optimization by generalizing the original argument to non-trivial topologies.
}
\preprint{ITEP-TH-06/19}
\begin{document}

\maketitle


\section{Introduction}
\label{sec:intro}
\indent

The notion of computational complexity is original from computer science and generally refers to the minimum number of fundamental operations needed to implement a given task. In the context of quantum mechanics \cite{Aaronson:2016vto}, the typical task is that of using a unitary transformation $\Uc$ (a quantum circuit) to prepare a target state $|\Psi_\text{T}\rangle$ from a given reference state $|\Psi_\text{R}\rangle$, that is,
\be\label{task}
|\Psi_\text{T}\rangle = \mathcal{U} |\Psi_\text{R}\rangle\,.
\ee
The unitary $\mathcal{U}$ is assumed to be constructed as a sequence of fundamental operations 
called \emph{gates} that act on a small number of degrees of freedom (e.g., only $2$-qubit operations in a multi-qubit system). If we denote by $\mathcal{D}(\mathcal{U})$ -- the circuit depth -- the total number of gates used in a particular $\mathcal{U}$, the circuit complexity is simply
\be\label{cpxt}
\Cc(|\Psi_\text{T}\rangle,|\Psi_\text{R}\rangle)=\min_{\mathcal{U}}\mathcal{D}(\mathcal{U})=\mathcal{D}(\mathcal{U}_\text{optimal})\,,
\ee
where $\mathcal{U}_\text{optimal}$ is the circuit having the minimal number of gates.

In \cite{2006Sci...311.1133N}, Nielsen \emph{et. al.} introduced a nice geometric approach to quantum complexity. The basic idea was to assume that $\mathcal{U}$ is generated by some time-dependent Hamiltonian $H(t)$, such that each particular circuit can be understood as a particular path $\gamma$ in the space of unitaries,
\be
\label{Pexp}
\mathcal{U}_\gamma(t) \ = \mathcal{P}\exp\left(i\int_{\gamma} H(t) dt\right)\,.
\ee
Here the parameter $t$ can be taken to be in the range $[0,1]$ and, as dictated by \eq{task}, the paths are constrained to obey the boundary conditions $\mathcal{U}_\gamma(0)=\mathbb{I}$ and $\mathcal{U}_\gamma(1)=\mathcal{U}$. The main advantage is that, under appropriate definition of the circuit depth functional $\mathcal{D}[\mathcal{U}_\gamma(t)]$, the problem of finding the optimal circuit reduces to one of finding geodesics in a curved Riemannian manifold. The complexity \eq{cpxt} then is simply the length of this geodesic. To be precise, whenever it is possible to argue that the effective Hamiltonian $H(t)$ is of the form 
\be
H(t) = \sum_I Y_I(t)\mathcal{O}_I\,,
\ee
where $\{\mathcal{O}_I\}$ (the set of fundamental gates) are the generators of some Lie algebra, by using $\mathcal{D}[\mathcal{U}_\gamma]\equiv \int_{0}^1dt\sum_I|Y_I(t)|^2$ the problem reduces to the one of finding geodesics $\gamma$ (parametrized by the control functions $Y_I(t)$) in the corresponding Lie group manifold.

This geometric approach to circuit complexity has recently found its use in high energy physics motivated by two competing proposals for the complexity in conformal field theory states with a holographic dual, the so called \q{complexity $=$ volume} \cite{Stanford:2014jda} and \q{complexity $=$ action} \cite{Brown:2015bva} conjectures. The ground state complexity of a free scalar field was studied in \cite{Jefferson:2017sdb} (see also \cite{Chapman:2017rqy} for a related approach using the Fubini-Study metric), which was later generalized to coherent states \cite{Guo:2018kzl}, free fermions \cite{Hackl:2018ptj,Khan:2018rzm}, complex scalar \cite{Sinamuli:2019utz}, weakly interacting theories \cite{Bhattacharyya:2018bbv}, non-equilibrium states \cite{Alves:2018qfv,Camargo:2018eof,Ali:2018fcz,Liu:2019qyx}, thermofield double states \cite{Yang:2017nfn,Kim:2017qrq,Chapman:2018hou,Jiang:2018nzg}, and recently to ground states of lattice models displaying quantum phase transitions \cite{Ali:2018aon,Liu:2019aji} (see also \cite{Balasubramanian:2018hsu,Akal:2019ynl,Yang:2018nda,Yang:2018tpo,Yang:2018cgx} for further developments). A common feature of all these calculations is that they were carried out for free or weakly interacting theories. At the moment it remains unclear from a field theoretic point of view how to make sense of the complexity of states in strongly interacting theories, which would provide a better understanding of the holographic conjectures mentioned above. An exception here is the case of 2d CFTs, which has been studied recently in \cite{Caputa:2018kdj,Magan:2018nmu} and leads to results similar to the path integral optimization approach of \cite{Caputa:2017urj,Caputa:2017yrh,Bhattacharyya:2018wym,Takayanagi:2018pml}. 

In the present paper, we consider a two-step problem: first we define
what may be called \emph{topological complexity} for knots as the minimal number of modular $S$ and $T$ operations on the torus
$T^2$ that are necessary in order to produce a generic knot from a reference knot. Then
we consider the representation of these knots as states in the Hilbert
space of Chern-Simons theory with compact gauge group $G$ and level
$k$. We define in a similar way the circuit complexity of this knot
state as the size of the optimal circuit built from the unitary
representations of the modular transformations, $\Sc$ and
$\Tc$. The two in general will not be equivalent.

The quantum Chern-Simons theory is well-known to have only global topological degrees of freedom. Its quantization inside a solid torus gives rise to a finite-dimensional Hilbert space $\Hc(T^2;G,k)$ whose states correspond to 3-manifolds $\Mc$ having a boundary $\partial\Mc=T^2$. For $G=SU(N)$, the same Hilbert space is known to appear also in the quantization of the $SU(N)_k$ Wess-Zumino-Witten conformal field theory on $T^2$~\cite{Witten:1988hf,Bos:1989wa,Bos:1989kn,Labastida:1990bt}. The finite-dimensional nature of the Hilbert space makes the quantum theory remarkably simple (essentially an instance of quantum mechanics), despite of the intricate non-perturbative interactions appearing in the action. A similar story holds for a generic Riemann surface $\Sigma$ (eventually with punctures) at $\partial\Mc$.

A canonical basis on $\Hc(T^2;G,k)$ is constructed by inserting
circular Wilson loop operators colored with an integrable highest
weight representation of $G$ along the non-contractible cycle of the
solid torus \cite{Witten:1988hf}. The specific states we will consider
correspond to Wilson lines tied in the form of an arbitrary
\emph{torus knot} $\Kc_{n,m}$ and to a simple class of links made of
many torus knots \cite{Murasugi}. These torus knots are classified by
a pair of coprime integers $(n,m)$ that count how many times the knot
winds around the two fundamental cycles of the torus. Recalling that
different integer linear combinations of cycles are related by the
action of the modular group $P\SL$, one finds that the $(n,m)$ torus
knot can be obtained from a simple circular line parallel to the
non-contractible cycle (the \emph{unknot} $\Kc_{1,0}$) through a
modular transformation. Since $P\SL$ has a unitary representation on
$\Hc(T^2;G,k)$, this naturally defines a quantum circuit building the
state $|\Kc_{n,m}\rangle$ from the unknot state $|\Kc_{1,0}\rangle$.

The topological complexity corresponds to the shortest word of $P\SL$
generators yielding the desired knot transformation, which reduces the
problem of finding the complexity to a number theory problem. We also
briefly discuss the case of \emph{rational} (or \emph{2-bridge}) knots
and links \cite{Murasugi} by adopting the standard presentation of
knots as the closure of braid words. For the specific case of 2-bridge
knots, the problem can again be translated into the same one of $P\SL$
generators. 

When the study of complexity is extended to the knot states in the quantum Hilbert space the representation constraints need to be taken into account\footnote{We thank the anonymous referee for reminding us about it.}. These constraints lead to non-trivial linear relations between the states. In simple words, states with $(n,m)$ and $(q,p)$ can become
equivalent if they are related by one such constraint. Therefore the analysis based on a simple counting of the
word generators, in general, only gives an upper bound on the circuit
complexity. 

Keeping in mind this caveat we denote $U_{n,m}^\text{min}$ the optimal topological circuit made of the smallest number of $P\SL$ generators needed to construct a generic knot, and $\Uc_{n,m}^\text{min}$ the optimal quantum circuit operator in the unitary representation acting on the corresponding Hilbert space. We show that different realizations of a topological circuit $U_{n,m}$ are associated with different continued fraction decompositions of the rational $\tfrac{n}{m}$, showing an interesting interplay between our problem and number theory. In particular, we prove (see Proposition \ref{prop1}) that the optimal circuit $U^\text{min}_{n,m}=T^{a_1}S\cdots T^{a_r} S$ corresponds to $a_i=(-1)^{i+1}b_i$, where $b_i>0$ are the regular (or Euclidean) continued fraction coefficients. The \lq\lq topological complexity\rq\rq~of this knot (the minimal number of $S$ and $T$ transformations needed to produce it from the unknot) is then simply \begin{align}\label{Cnmintro}
    C_{n,m} = \sum_{i=1}^r(b_i+1) + |f|\,,
\end{align}
where $|f|$ is a further contribution due to possible framing of the knot. The corresponding quantum complexity $\Cc_{n,m}$ of the knot state $|\Kc_{n,m}\rangle$ will in general be lower than \eq{Cnmintro} due to additional constraints on the corresponding optimal quantum circuit $\Uc_{n,m}^{opt}$, i.e. $\Cc_{n,m}\le C_{n,m}$.

We propose a geometric interpretation of $C_{n,m}$ in terms of geodesic paths on a graph connecting rational numbers (the Farey graph), which has a natural representation in the upper-half plane. We also discuss a related interpretation in the Stern-Brocot tree of rational numbers, which is related to the view of topological complexity in terms of geodesics on the Cayley graph of $S$ and $T$ generators \cite{Lin:2018cbk}. We find evidence that, in addition to Euclidean continued fractions, the ancestral path continued fractions introduced in \cite{beardon2012} also yield an optimal circuit, indicating that the complexity appears as an invariant related to different continued fractions. 

The upper bound on quantum state complexity is saturated in the semiclassical limit of the Chern-Simons theory. Intuitively, this topological complexity of semiclassical knot states in Chern-Simons is related to the old problem of classifying knots, even though nowadays there are many known knot invariants designed to solve this problem.\footnote{As in the case of the degeneracy of torus knots states mentioned above, knot invariants can appear degenerate on inequivalent knots. However, by adjusting parameters $k$, $G$, or the representation $R$ coloring the knot, one seems to always be able to find a polynomial distinguishing a given knot from an arbitrary collection of other knots \cite{Murasugi}.} We find that an obvious extension of the torus knot discussion to rational links (understood as four-strand braid closures) recovers the same results for complexity when applied to torus knots, which are special members of the rational class.

We also discuss the relation with the path integral optimization approach to complexity developed in \cite{Caputa:2017urj,Caputa:2017yrh,Bhattacharyya:2018wym,Takayanagi:2018pml} by generalizing the construction to spaces whose topology is non-trivial. In this way, the optimization procedure is seen to include also a variation over the moduli space and a connection with our problem is then established.

We note that the relation of knots to quantum Chern-Simons theory as described above puts them in the quantum information context. Indeed, the connection between knots and quantum computing has long been appreciated -- they are alike quantum algorithms and their complexity should tell us about the complexity of the underlying quantum tasks \cite{10.2307/26061414,2008RvMP...80.1083N,Melnikov:2017bjb}. Complexity, in the sense discussed here, can be related to complexity (in an abstract sense) of certain algorithms for computing knot invariants, cf.~\cite{2009arXiv0908.0512K,Cherednik:2011nr,Freedman:2000rc}.

The remainder of this paper is organized as follows. In Section~\Sec{CS}, we review the construction of the Hilbert space of Chern-Simons theory on the torus and give a detailed definition of torus knots states to be studied in the sequence. We also discuss the subtleties involved when going from the modular group to its unitary representation and the eventual conditions on the parameters of the theory which allow the given set of torus knot states to be non-degenerate. In Section~\Sec{comp}, we compute the topological complexity of torus knots after proving a set of statements about continued fractions and then discuss the results through examples. We also give a geometric interpretation of the result and discuss relations with mathematical results on geodesic paths on the Farey graph. A comprehensive discussion then follows on the circuit complexity of knot states and its relation with the topological result, which constitutes an upper bound as anticipated above. In Section~\Sec{general} we extend the discussion to the case of other knots and links.  In Section \Sec{pathintegral} we discuss a connection with the path integral optimization approach to complexity. We summarize our findings in the closing Section~\Sec{conclusions}. Some additional details about torus knot states are left for Appendix~\ref{sec:appendix}.

\section{Knot States in Chern-Simons theory}
\label{sec:CS}

\subsection{Knot complement states}

\indent

The Chern-Simons theory with gauge group $G$ and level $k$, denoted by $G_k$, is defined on a (compact, connected, oriented) 3-manifold $\Mc$ by the action
\begin{align}\label{CSaction}
S_{CS}[A] = \frac{k}{4\pi} \int_{\Mc} \Tr\left(A\wedge \diff A+\frac{2}{3}A\wedge A\wedge A\right)\,,
\end{align}
where $A=A_\mu\diff x^\mu$ is the gauge field and the trace is taken in the fundamental representation of the Lie algebra of $G$. 
The level $k$ is an integer in order to ensure gauge invariance of the path integral defining the quantum theory \cite{Witten:1988hf}. 

This action is topological in the sense that it is independent of the metric chosen in $\Mc$. As a consequence, the expectation value of any gauge invariant and metric independent observable of the theory defines a topological invariant in $\Mc$. The natural example is the Wilson loop operator associated with an oriented closed curve (e.g., a \emph{knot}) $\Kc$,
\begin{align}
    W_\Rc(\Kc) = \Tr_\Rc \, \Pc\exp\left(\oint_\Kc A\right)\,,
\end{align}
obtained by tracing in a given representation $\Rc$ the holonomy of the gauge field around $\Kc$. More generally, the expectation value of any product $W_{\Rc}(\Lc)\equiv\prod_i W_{\Rc_i}(\Kc_i)$ of Wilson loops computes a topological invariant of the \emph{link} $\Lc=\coprod_i \Kc_i$ obtained by joining the (non-intersecting) knots $\Kc_i$.\footnote{The symbol $\coprod_{i}U_{i}$ means the disjoint union of the sets $U_{i}$.} This is calculated as usual by the path integral 
\begin{align}\label{W(L)}
    \big\langle W_{\Rc}(\Lc)\big\rangle_\Mc = \int_\Mc\Dc A\,\left(\prod_i W_{\Rc_i}(\Kc_i)\right)\,e^{\ii S_{CS}[A]}\,,
\end{align}
where the normalization factor ${Z(\Mc)}^{-1}$ is omitted for brevity $\big({Z(\Mc)}=\int_\Mc\Dc A\,\,e^{\ii S_{CS}[A]}\big)$. When the gauge group $G$ is $SU(2)$ and the $\Rc_i$ are all fundamental representations, this reduces to the celebrated Jones polynomial of $\Lc$ \cite{Witten:1988hf}. Similarly, for $SU(N)$ and $SO(N)$ one gets the HOMFLY-PT~\cite{Freyd:1985dx,Przytycki:1987} and the Kauffman~\cite{Kauffman:1990} polynomials, respectively. In general \eq{W(L)} gives access to infinitely many link invariants as the gauge group and representations are changed.

We are interested in Chern-Simons theory defined on a topological 3-manifold $\Mc$ with a 2-torus as a boundary, $\partial\Mc=T^2$.  Any such $\Mc$ can be understood as the \emph{knot complement} of some knot $\Kc$ in a closed 3-manifold, which here for simplicity we take to be the 3-sphere. Namely, we will be interested in 3-manifolds $\Mc=S^3\backslash \Kc_\text{tub}$ constructed by removing from $S^3$ a small tubular neighborhood $\Kc_\text{tub}$ of a knot $\Kc$. 
The simplest example is provided by the trivial knot or \emph{unknot}, in which case the region $\Kc_\text{tub}$ is a simple (i.e., unknotted) solid torus and its complement $\Mc$ turns out to be another solid torus \cite{HatcherBook}. For a non-trivial knot the situation is illustrated in Figure \fig{knotcomplement}. 

The Chern-Simons path integral on $\Mc$ defines a state in the Hilbert space $\Hc(T^2;G,k)$ associated with the $T^2$ boundary. 
We shall denote this by $|\Kc\rangle$ (because $|S^3\backslash \Kc_\text{tub}\rangle$ would be too cumbersome) and refer to it as the \emph{knot complement state} associated to $\Kc$.\footnote{The construction here is a particular case of the \emph{link states} used in \cite{Salton:2016qpp,Balasubramanian:2016sro,Dwivedi:2017rnj,Balasubramanian:2018por,Melnikov:2018zfn} to study the relation between entanglement and topology in Chern-Simons, where the knot $\Kc$ is replaced by a link $\Lc=\coprod_i \Kc_i$ and the resulting manifold $\Mc$ has multiple torus boundaries.} 
Since the theory is topological, states constructed in this way only depend on the topology of $\Mc$, not on its precise geometry. Therefore, different quantum states on the torus can be constructed by considering different knot complements.

\begin{figure}
    \centering
    \includegraphics[width=.4\textwidth]{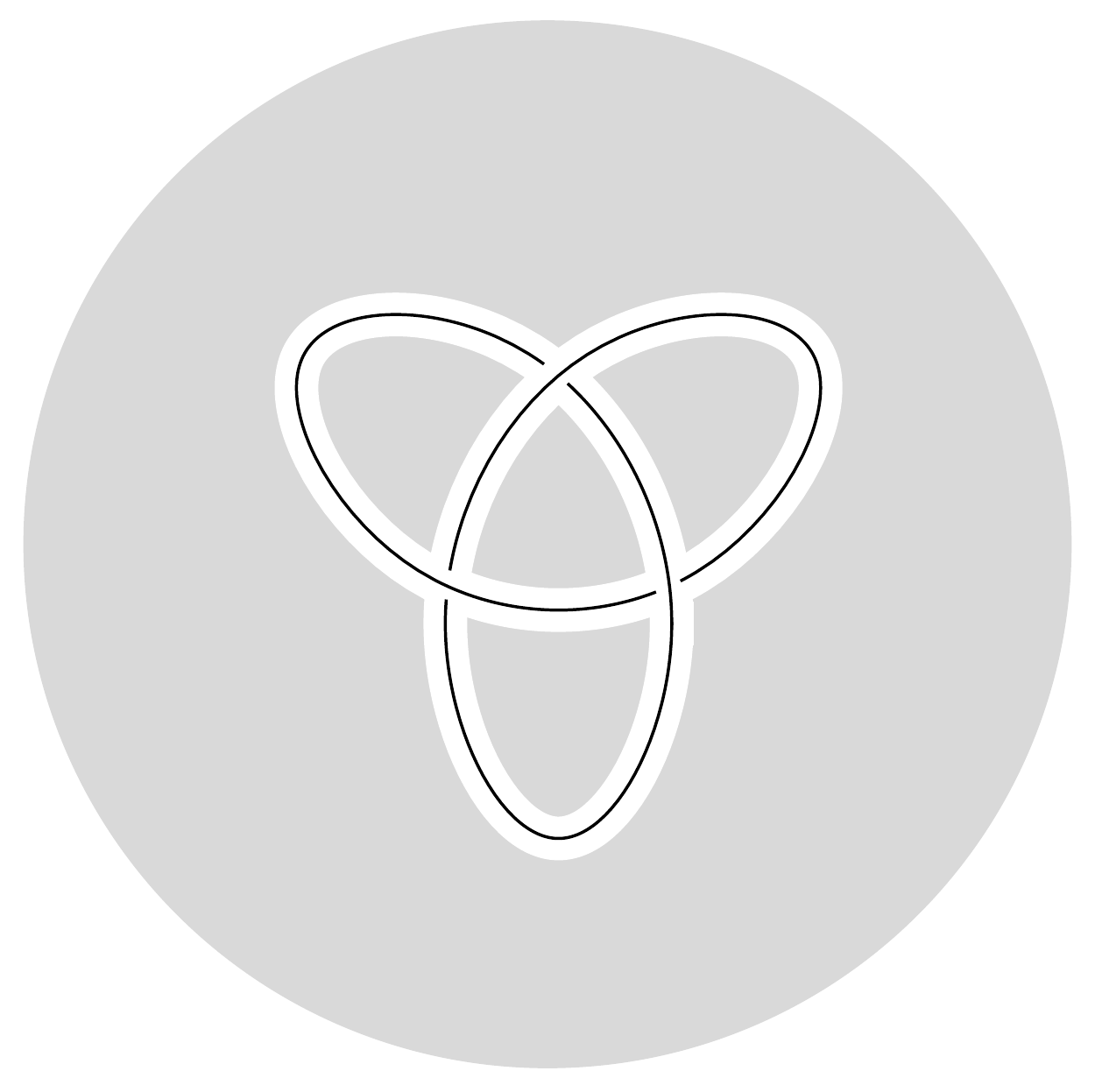}
    \caption{The knot complement manifold $\Mc$ corresponding to the trefoil knot (gray region). It is constructed by removing from $S^3$ a small tubular neighbourhood $\Kc_\text{tub}$ (white region) of the knot (black curve).}
    \label{fig:knotcomplement}
\end{figure}

A canonical basis for the torus Hilbert space $\Hc(T^2;G,k)$ can be constructed taking $\Mc$ to be a solid torus (the complement of the unknot, in the spirit above) and inserting Wilson lines in it \cite{Witten:1988hf}. Namely, the path integral on the solid torus with a Wilson line in the representation $\Rc_j$ inserted along the non-contractible cycle defines a state $|j\rangle$ on the boundary (its conjugate $\langle j|$ corresponds to the insertion of a Wilson line in the conjugate representation $\overline{\Rc}_j$ with inverted boundary orientation). 
An orthonormal basis for $\Hc(T^2;G,k)$ then consists of the set $\big\{|j\rangle\big\}$ where $j$ runs over integrable highest weight representations of the gauge group $G$ at level $k$. For instance, in the case of $U(1)_k$ these integrable representations are labelled by an integer $j=0,1,\ldots,k-1$, while for $SU(2)_k$ they are labelled by a half-integer $j=0,\tfrac{1}{2},\ldots,\tfrac{k}{2}$. The resulting Hilbert spaces $\Hc(T^2;U(1),k)$ and $\Hc(T^2;SU(2),k)$ have dimensions $k$ and $k+1$, respectively. 

The knot complement state $|\Kc\rangle\in\Hc(T^2;G,k)$ obtained by path-integrating over $\Mc$ can be expanded in the $|j\rangle$ basis as 
\begin{equation}\label{ketK}
    |\Kc\rangle = \sum_{j}\psi_j(\Kc)\,|j\rangle\,.
\end{equation}
The coefficients $\psi_j(\Kc) = \langle j|\Kc\rangle$ are computed by the inner product corresponding to gluing  the two manifolds that define $|\Kc\rangle$ and $\langle j|$ along their common $T^2$ boundary. The result is a sphere with the Wilson line inserted, which is nothing but the knot invariant \eq{W(L)} on $S^3$, 
\begin{equation}\label{psi(K)}
    \psi_j(\Kc)= \big\langle W_{\overline{\Rc}_j}(\Kc)\big\rangle_{S^3}\,.
\end{equation}
In other words, the state $|\Kc\rangle$ contains all the Wilson loop knot invariants of the knot $\Kc$ at level $k$. 

\subsection{The framing ambiguity}

Strictly speaking, the invariants \eq{W(L)} are only well-defined for \emph{framed knots} (links) \cite{Witten:1988hf,Guadagnini:1990uw}. Informally, a framed knot $\Kc$ is just the usual knot $\Kc$ constructed using a ribbon instead of a dimensionless string. In other words, it is the object obtained by stretching the curve $\Kc$ a little bit at each point along a direction specified by a normal vector field $\mathbf{v}$ (called the \emph{framing}), as illustrated in Figure \fig{framedknots}. The resulting framed knot has twists in the ribbon, the number $f$ of which is called \emph{framing number} or \emph{self-linking number} of $\Kc$.  The issue becomes clear in the case of $U(1)$ Chern-Simons theory, where the action \eq{CSaction} is quadratic and correlators can be computed in closed form. Namely, for a link $\Lc=\coprod_\alpha\,\Kc_\alpha$, \cite{Witten:1988hf}
\begin{align}\label{vevambiguous}
    \big\langle W(\Lc)\big\rangle_{S^3} = \exp\left(\frac{2\pi\ii}{k}\sum_{\alpha,\beta}n_\alpha n_{\beta}\ell_{\alpha\beta}\right)\,,
\end{align}
where the integer $n_\alpha$ labels the representation of the corresponding $\Kc_\alpha$ and
\be\label{linking}
\ell_{\alpha\beta} = \frac{1}{4\pi}\oint_{\Kc_\alpha}dx^\mu \oint_{\Kc_\beta}dy^\nu\,\epsilon_{\mu\nu\rho}\frac{(x-y)^\rho}{|x-y|^3}\,
\ee
is the Gauss linking number, a well-known topological invariant counting how many times the knots $\Kc_\alpha$ and $\Kc_\beta$ ($\alpha\ne\beta$) wind around each other. There is an inherent ambiguity in \eq{vevambiguous}, however, coming from the contributions when $\alpha=\beta$, where a prescription is needed to deal with the integration over coincident points. The problem is similar to the ambiguity in the definition of the composite operator $(\oint_{\Kc_\alpha} A)^2$ in the quantum theory. Even though a careful inspection of the integral in $\ell_{\alpha\alpha}$ shows that it is well-defined and finite \cite{Guadagnini:1989am}, it turns out to be metric-dependent and hence not invariant under deformations of $\Kc_\alpha$, which spoils the desired topological invariance of the result. A regularization prescription that restores this topological property amounts to introducing a framing for $\Kc_\alpha$, as explained above, and defining $\ell_{\alpha\alpha}$ as the linking number between $\Kc_\alpha$ and its framing, which is precisely the self-linking number $f_\alpha$. 

In general, there is no natural choice of framing and generic observables will depend on this choice (see, e.g., \cite{Marino:2001re} for the framing dependence of Wilson loops in $SU(N)$ Chern-Simons).\footnote{It is interesting to mention that the entanglement structure of link complement states has been shown to be framing-independent \cite{Balasubramanian:2016sro,Balasubramanian:2018por}.} However, this by no means takes away the merit of Chern-Simons theory, since the transformation rule for expectation values of Wilson loops under a change of framing is well-defined. Namely, if the knot $\Kc_\alpha$ has its framing shifted by $t$ units, $\ell_{\alpha\alpha}$ is increased by $t$ and, as a result, it is clear from \eq{vevambiguous} that the Wilson loop picks up a phase factor,
\begin{align}\label{framing}
    \big\langle W(\Lc)\big\rangle_{S^3} \,\longrightarrow\, \exp\left(2\pi\ii t\, h_\alpha\right)\big\langle W(\Lc)\big\rangle_{S^3}\,,\qquad h_\alpha\equiv n_\alpha^2/k\,.
\end{align}
Even though we have only discussed here $U(1)_k$, the same conclusion holds in $SU(N)_k$ as well, where $h_\alpha$ in that case is the conformal weight of the Wess-Zumino-Witten primary field corresponding to the representation $\Rc_\alpha$ \cite{Witten:1988hf}. 

\begin{figure}[ht]
\centering
    \subfigure[]{\includegraphics[width=.3\textwidth]{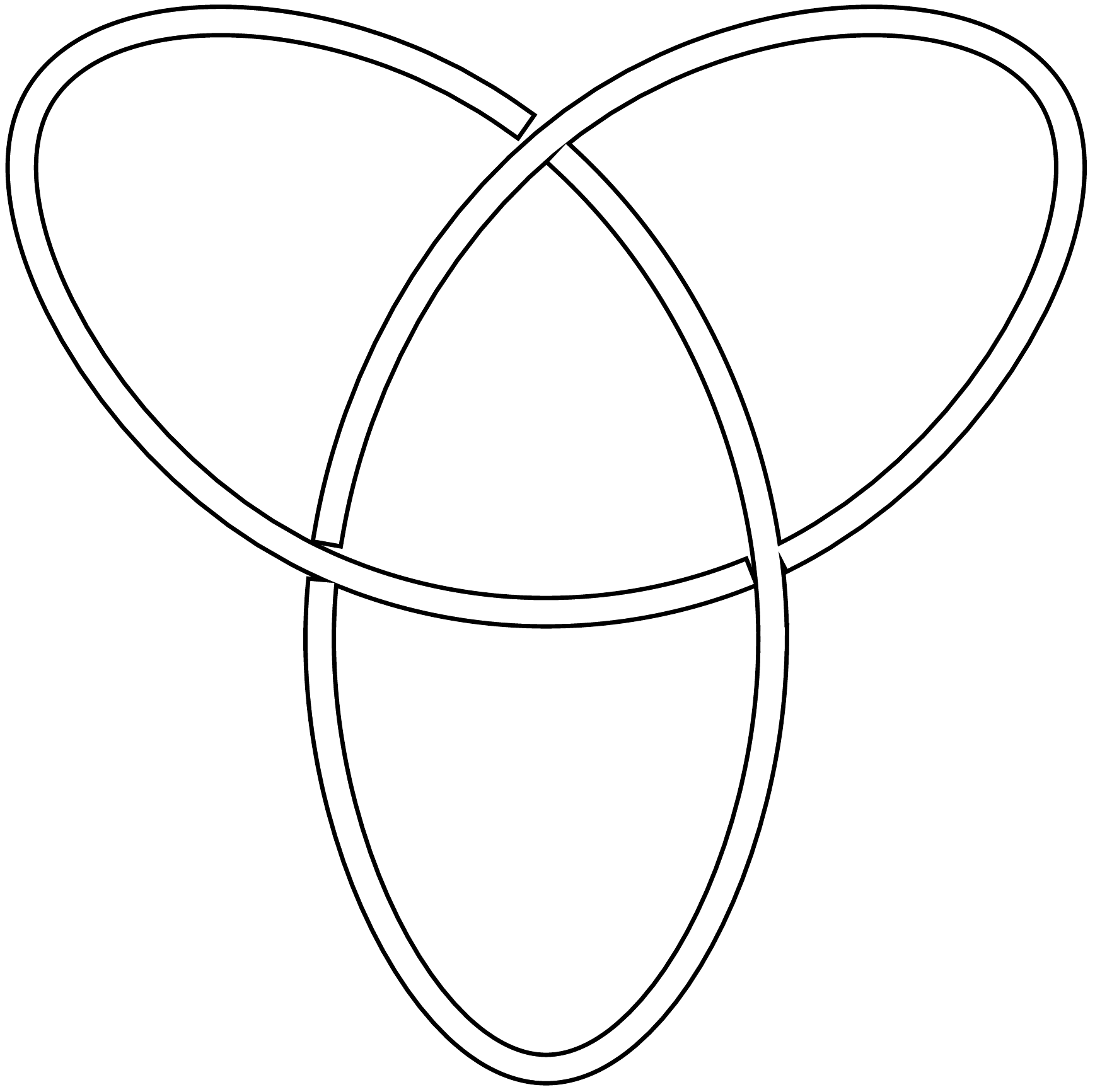}}\qquad\qquad
    \subfigure[]{\includegraphics[width=.3\textwidth]{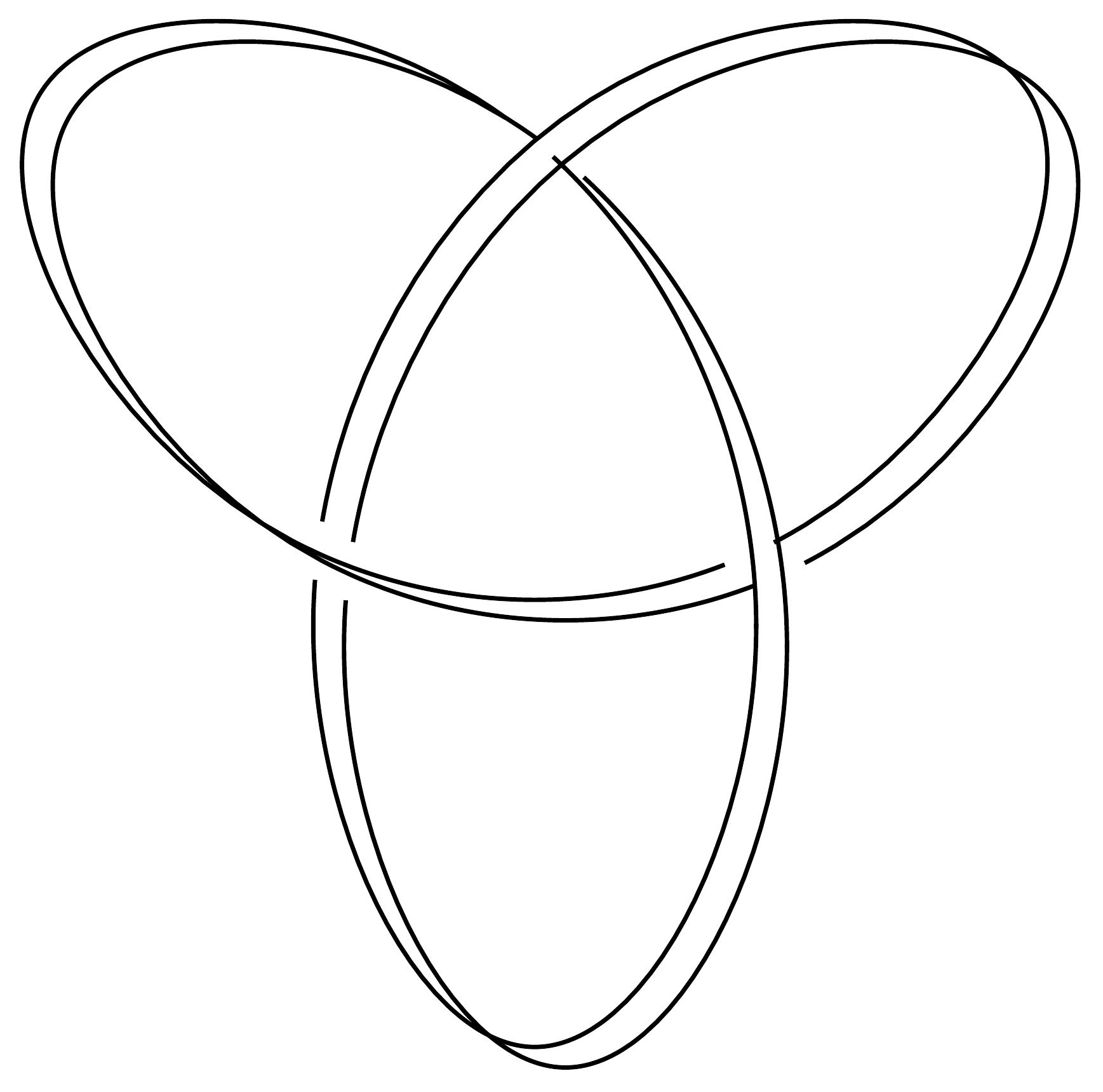}}
    \caption{Two different framings of the trefoil knot. (a) is the simplest to visualize, the so called \emph{blackboard framing}, where the normal vectors point all to the same vertical direction and the ribbon lies flat on the projection plane. According to Calugareanu's theorem \cite{Kaul:1999je}, the self-linking number in this case is the \emph{writhe number} of the knot ($w=3$ here). (b) shows another choice of framing having $4$ extra twists on the ribbon (in general, any framing can be drawn as a blackboard framing with a number of extra twists). Note that each twist may increase or decrease the self-linking number, depending on whether it is a positive or negative twist. In particular, the self-linking number can always be set to zero by an appropriate number of twists -- this defines the so called \emph{canonical framing}.}
    \label{fig:framedknots}
\end{figure}

\subsection{Modular transformations and torus knots}
\label{sec:modular}

There is a natural set of unitary transformations acting on the Hilbert space $\Hc(T^2;G,k)$. They correspond to the unitary representations of the modular group $P\SL$ of large diffeomorphisms of the torus. The modular group is generated by the $S$ and $T$ transformations
\begin{equation}
    S=\begin{pmatrix}
    0 &-1\\
    1 &0
    \end{pmatrix}\,,\qquad
    T=\begin{pmatrix}
    1 &1\\
    0 &1
    \end{pmatrix}
\end{equation}
that act on the torus modular parameter $\tau$ as $S:\tau\to-\frac{1}{\tau},\,T:\tau\to\tau+1$. They satisfy $S^2 = (ST)^3 = 1$ and the $P$ in front of $P\SL$ means that $\SL$ matrices $M$ and $-M$ should be identified. It is easy to see that $S$ exchanges the two fundamental cycles of the torus, while $T$ generates twists around the  contractible cycle (the so-called \emph{Dehn twists}). Any modular transformation can be written as a sequence of $S$ and $T$ transformations. 
These act naturally on the torus Hilbert space through their unitary matrix representations $\Sc$ and $\Tc$ (of dimension $
\dim\Hc(T^2;G,k)$), which take particularly simple forms in the $|j\rangle$ basis. For instance, in the $U(1)_k$ theory, they are given by \cite{Gepner1986}
\begin{equation}\label{STU(1)}
\Sc_{j_1,j_2} = \frac{1}{\sqrt{k}} e^{2\pi\ii\frac{j_1 j_2}{k}},\quad \Tc_{j_1,j_2} = e^{2\pi \ii h_{j_1} } \delta_{j_1,j_2}
\end{equation}
with $h_{j} = \frac{j^2}{2k}$, while for $SU(2)_k$ they read 
\begin{equation}\label{STSU(2)}
\Sc_{j_1,j_2} = \sqrt{\frac{2}{k+2}}\sin\left(\frac{\pi (2j_1+1)(2j_2+1)}{k+2}\right),\quad \Tc_{j_1,j_2} = e^{2\pi \ii h_{j_1}} \delta_{j_1,j_2}
\end{equation}
with $h_{j} = \frac{j(j+1)}{k+2}$. 

From a knot theory perspective, a sequence of $S$ and $T$ diffeomorphisms transforms a circular Wilson loop (the unknot) inside the solid torus into an arbitrary \emph{torus knot}. 
We recall that torus knots, which we denote by $\Kc_{n,m}$ or by the pair $(n,m)$, are knots that can be drawn on the surface of a torus without self-intersections. Non-trivial knots are labelled by two coprime\footnote{When they are not coprime, $\Kc_{n,m}$ denotes instead the \emph{torus link} made of $N=\text{gcd}(n,m)$ copies of $\Kc_{\frac{n}{N},\frac{m}{N}}$.}  numbers $(n,m)$ that count how many times the knot winds around the two fundamental cycles of the torus (the non-contractible and contractible one, respectively). With no loss of generality, they can be taken to be  $n>m>1$.\footnote{Namely, it is not hard to check that $i)$ $\Kc_{n,m}\equiv\Kc_{m,n}$; $ii)$ $\Kc_{n,-m}$ is the mirror image of $\Kc_{n,m}$; $iii)$ $\Kc_{-n,-m}$ is $\Kc_{n,m}$ with the opposite orientation. }
The unknot instead corresponds to the equivalence class of pairs $(n,m)$ with either $n$ or $m$ (or both) equal to $1$; with a slight abuse of notation we will chose the unknot representative as  $\Kc_{1,0}$. 
The first few non-trivial torus knots are listed in Figure \fig{torusknots}. 
\begin{figure}
    \centering
    \includegraphics[width=.8\textwidth]{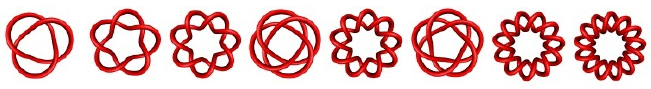}
    \caption{The first few non-trivial torus knots as told by their crossing number, namely $\Kc_{3,2}$ (the \emph{trefoil}), $\Kc_{5,2}$, $\Kc_{7,2}$, $\Kc_{4,3}$, $\Kc_{9,2}$, $\Kc_{5,3}$, $\Kc_{11,2}$, and $\Kc_{13,2}$ (figure taken from \cite{knotplotsite}). We recall that the crossing number of a $(n,m)$ torus knot is $\text{cr}(\Kc_{n,m})=\text{min}\big\{(n-1)m,(m-1)n\big\}=(m-1)n$.}
    \label{fig:torusknots}
\end{figure}

The pair of coprime numbers $(n,m)$ defines the family of diffeomorphisms
\be\label{Unm}
U_{n,m} =
\begin{pmatrix}
 n & \gamma \\
 m & \delta
\end{pmatrix}
\ee
that transform the unknot to $\Kc_{n,m}$, i.e., $\Kc_{n,m}=U_{n,m}\,\Kc_{1,0}$ since $U_{n,m}\left(\begin{smallmatrix} 1\\0 \end{smallmatrix}\right)=\left(\begin{smallmatrix} n\\m \end{smallmatrix}\right)$. Here, $\gamma,\delta$ are constrained by the unit determinant condition $|n\delta-m\gamma|=1$. 

In a generic situation the representation of a torus diffeomorphism of the form \eq{Unm} on the torus Hilbert space, characterized by the operator $\Uc_{n,m}$, defines a new quantum state
\begin{equation}\label{jnm}
    |j_{n,m}\rangle \equiv \Uc_{n,m}|j\rangle = \sum_i\big(\Uc_{n,m}\big)_{ji}|i\rangle\,.
\end{equation}
Non-trivial relations for the operator $\Uc_{n,m}$ that depend on the chosen representation will be discussed in section~\ref{sec:ambiguity}; these will reduce the size of the space of states (\ref{jnm}) compared to the corresponding knot space, but for the moment let us assume that such relations are not present. The interpretation of $|j_{n,m}\rangle$ is clear: it corresponds to a solid torus with a Wilson loop inserted along the torus knot $\Kc_{n,m}$, i.e., $W_{\Rc^*_j}(\Kc_{n,m})$, in contrast to $W_{\Rc^*_j}(\Kc_{1,0})$ that defines the original $|j\rangle$ (see Figure \fig{basis}). Notice that the unitary nature of $\Uc_{n,m}$ ensures that the states $\big\{|j_{n,m}\rangle\big\}$ are also orthonormal. In other words, a diffeomorphism on the torus amounts to a change of basis in $\Hc(T^2;G,k)$. We shall refer to $\big\{|j_{n,m}\rangle\big\}$ as the torus knot basis, as opposed to the unknot basis $\big\{|j\rangle\big\}$.

\begin{figure}[ht]
\centering
    \subfigure[]{\includegraphics[width=.3\textwidth]{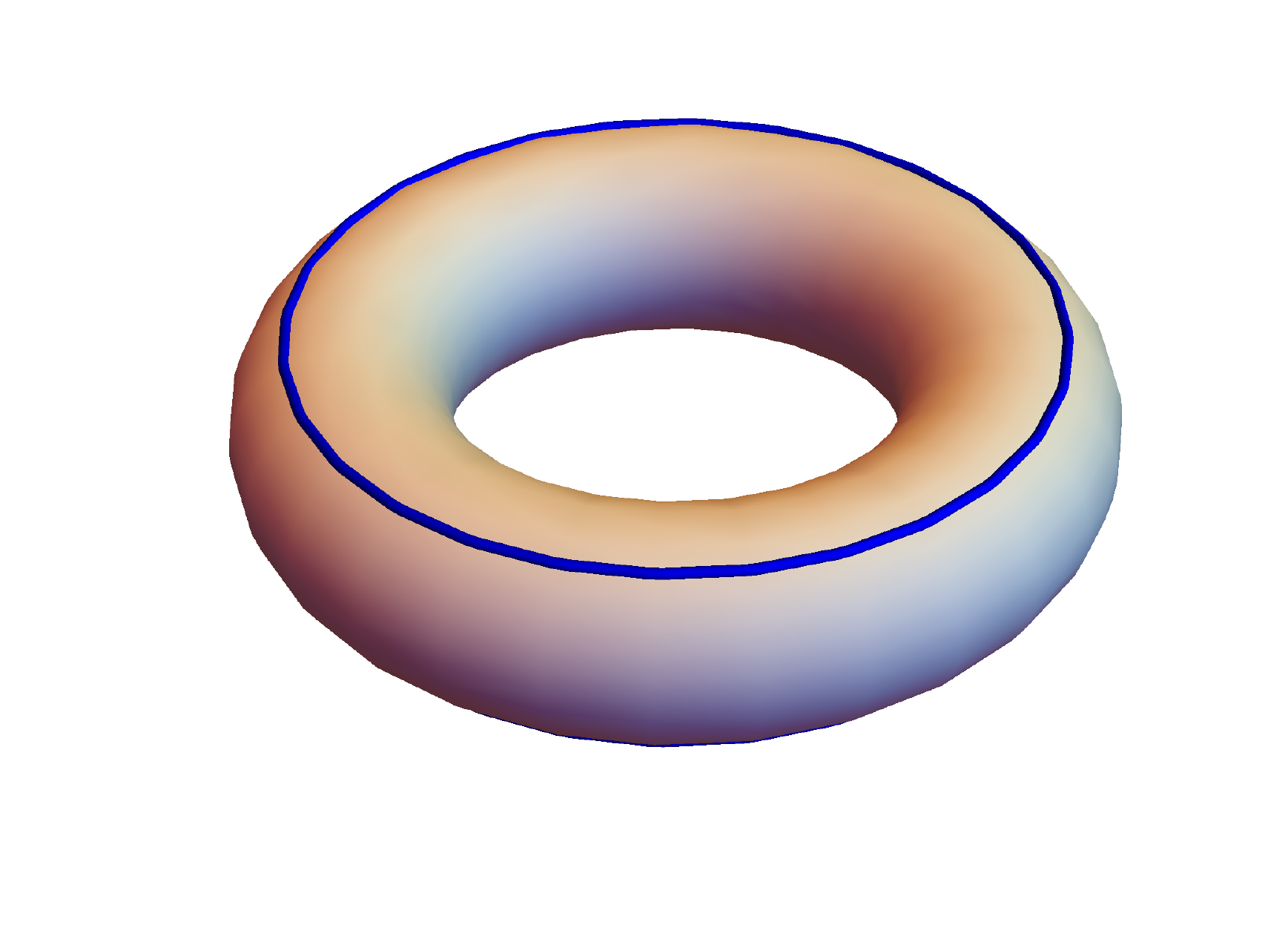}}\qquad\qquad
    \subfigure[]{\includegraphics[width=.3\textwidth]{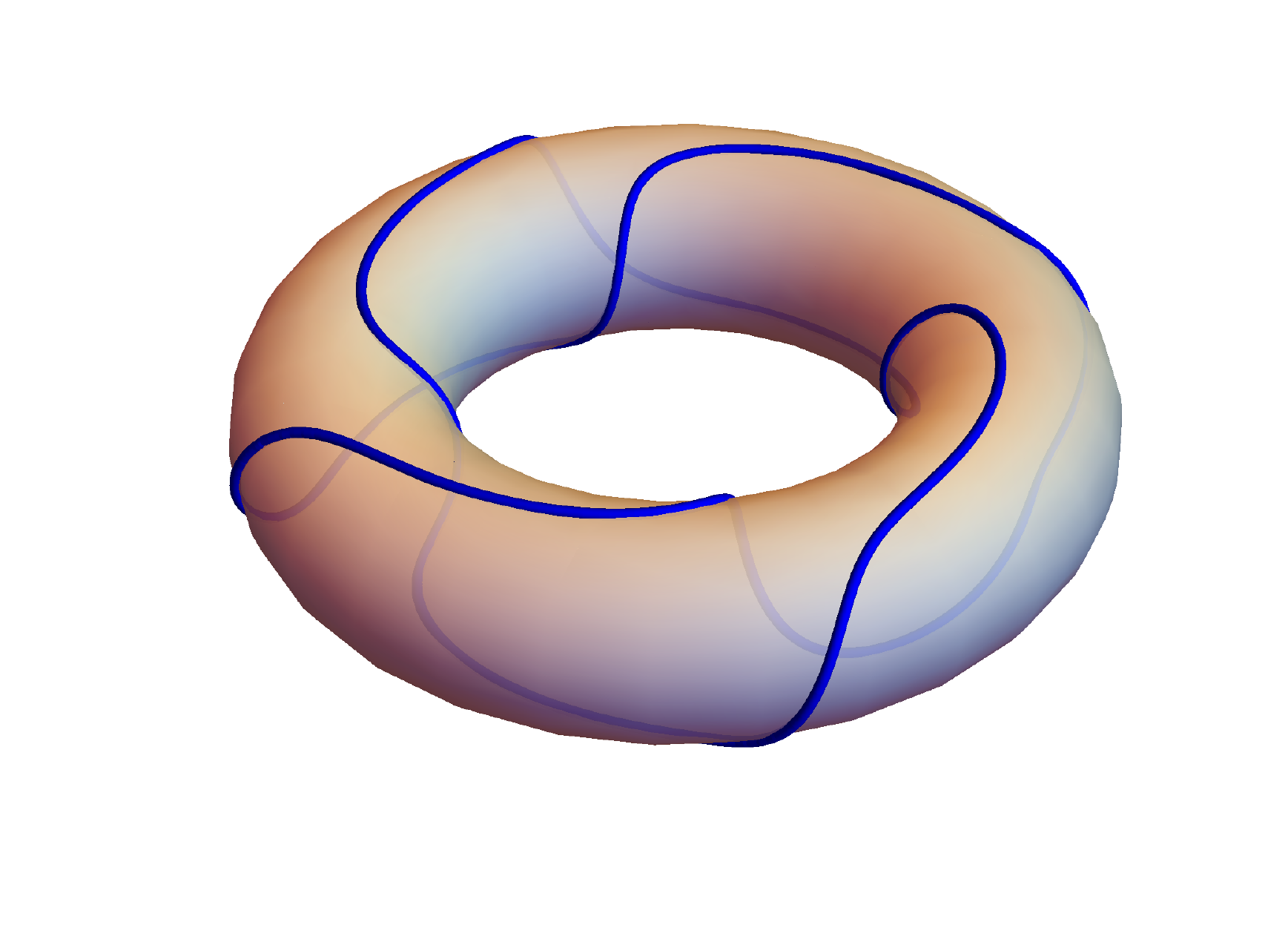}}
    \caption{The action \eq{jnm} of $\Uc_{n,m}$ on the basis vectors corresponding to a given representation $\overline{\Rc}_j$. It transforms the circular Wilson loop $|j\rangle$ depicted in (a) into the $(n,m)$ torus knot $|j_{n,m}\rangle$ illustrated in (b).}
    \label{fig:basis}
\end{figure}

The action of $\Uc_{n,m}$ on a generic torus knot complement state \eq{ketK} is less trivial,
\begin{align}\label{UnmonK}
    \Uc_{n,m}|\Kc_{p,q}\rangle &= \sum_{j}\psi_j(\Kc_{p,q})|j_{n,m}\rangle \notag\\
    &= \sum_{i}\bigg(\sum_j{\psi}_j(\Kc_{p,q})\big(\Uc_{n,m}\big)_{ji}\bigg)|i\rangle \,,
\end{align}
which is itself a different torus knot complement state, say $|\Kc_{r,s}\rangle$. This can be seen using the explicit operator description of torus knots in Chern-Simons theory in terms of which the natural action of $P\SL$ becomes manifest (see \cite{Labastida:1990bt} for details). Namely, an arbitrary torus knot Wilson loop $W_\Rc(\Kc_{p,q})$ can be obtained by acting with the \emph{torus knot operator} $\mathbf{W}_\Rc^{(p,q)}$ on the empty solid torus (the \lq\lq vacuum state\rq\rq), and modular transformations $U$ (with unitary representation $\Uc$) map these operators between themselves, that is, $\Uc^{-1}\mathbf{W}_\Rc^{(p,q)}\Uc=\mathbf{W}_\Rc^{(r,s)}$ with $\left(\begin{smallmatrix} r\\s \end{smallmatrix}\right)=U\left(\begin{smallmatrix} p\\q \end{smallmatrix}\right)$. It is clear from this construction that all torus knots can be obtained from the unknot (created by $\mathbf{W}_\Rc^{(1,0)}$) by an appropriate modular transformation, namely \eq{Unm}. In other words, $\Uc_{n,m}$ maps the unknot-complement state to the $\Kc_{n,m}$-complement, 
\begin{align}\label{Knm}
    \Uc_{n,m}|\Kc_{1,0}\rangle=|\Kc_{n,m}\rangle\,.
\end{align}
Some additional details about torus knot states are summarized in Appendix~\ref{sec:appendix}.

\subsection{Modular group versus unitary representation}
\label{sec:ambiguity}

An important subtlety to note is that going from a $P\SL$ group word $U_{n,m}$ to its unitary representation $\Uc_{n,m}$ may reduce the word length, since generically there are additional constraints satisfied by the unitary matrices $\Sc$ and $\Tc$. For instance, one of the fundamental properties of the quantum  modular
representation is that $\Tc$ is a diagonal matrix of finite order ($\Tc^p=1$ for some $p$) \cite{2005math......1443B}. This is evident in the $U(1)_k$  representation~\eq{STU(1)}, where
\begin{equation}
\label{TconstraintU(1)}
\Tc^{2k} = 1\,
\end{equation}
and in the $SU(2)_k$ representation~\eq{STSU(2)}, which has
\begin{equation}
\label{eq:1}
\Tc^{4(k+2)} = 1.
\end{equation}
Consequently, if two distinct torus knots are related by a modular transformation whose unitary representation is trivial (for instance, $U=ST^{2k}S$ in the $U(1)_k$ case yields $\Uc=\Sc\Tc^{2k}\Sc=1$) the corresponding knot states are actually equivalent. Moreover, as the matrix elements of $\mathcal{S}$ and $\mathcal{T}$ are roots of unity depending on the level $k$, extra modular constraints are to be expected in the unitary representation.

We can be more precise about the extra constraints in the modular
representation as follows. Let $\rho : \SL \rightarrow GL(d,\mathbb{C})$ be a
unitary modular representation acting on the Hilbert space
$\mathcal{H}$ of a Rational CFT (RCFT), where $d = \dim
\mathcal{H}$. As first discussed in \cite{Coste1999} and later proved
in broader generality in \cite{Ng:2008ze}, the kernel of $\rho$ is a
congruence subgroup of $\SL$, which then implies that the image of
$\rho$ is a representation of a finite group. Therefore, the unitary
modular representation can be understood as a representation of the
finite quotient group $SL_{2}(N) \equiv \SL / \Gamma(N)$, where
$\Gamma(N)$ is the principal congruence subgroup of level $N$, which in our case is a function of $G$ and $k$
\cite{Coste1999}. Consequently, all the torus knot states will reduce
to only a finite number of inequivalent classes, whose representatives
are related to each other by elements of the quotient. 

Understanding the modular coset representatives is crucial to
calculate the exact $\mathcal{S}$ and $\mathcal{T}$ word of a quantum
knot state. However, as the constraints actually depend on the group
$G$ and level $k$, we refrain to discuss these representations in any
more detail in this paper. In the rest of the paper, we will discuss
topological knots in terms of words in the the modular group $\SL$.
As will be argued in the next section, topological knots can be
understood as semiclassical knot states in the large level limit
$k \rightarrow \infty$ of quantum knot states. In this sense, the
semiclassical analysis will provide an upper bound on the complexity of
the quantum knot states.

\section{Circuit Complexity of Torus Knots and Knot States}
\label{sec:comp}

\subsection{Minimal words and topological complexity}

Let $U_{n,m}$ denote a torus diffeomorphism yielding $\Kc_{n,m}$ from the unknot as in \eq{Unm}. As any $P\SL$ matrix, it can be decomposed in terms of the $S$ and $T$ (and $S^{-1},T^{-1}$) generators as a word of the form 
\begin{align}\label{Unmword}
U_{n,m}=T^{a_1}ST^{a_2}S\ldots T^{a_r}S\,,
\end{align}
where $a_i$ are integer numbers and negative powers  are to be understood as positive powers of the inverse matrix. Here we have already used $S^2=1$ to exclude higher powers of $S$. Of course this decomposition is not unique thanks to the further group relation $(ST)^3=1$ and, more importantly, to the fact that \eq{Unm} actually defines a whole family of $U_{n,m}$'s (parametrized, e.g., by $\delta$). As a result, the word length 
\begin{align}\label{lengthU}
 ||U_{n,m}|| = \sum_{i=1}^r(|a_i|+1)
\end{align}
may grow indefinitely. Our goal is to find the shortest of these words.

We begin by clarifying the class of numbers $\{a_i\}$ that can appear in \eq{Unmword}. We first notice that each factor $T^{a_i}S\equiv M_{a_i} =\left(\begin{smallmatrix}a_i &-1\\ 1 &0\end{smallmatrix}\right)$ acts on the torus modular parameter $\tau$ as the M\"obius transformation $M_{a_i}(\tau)=a_i-\tfrac{1}{\tau}$. The full word $U_{n,m}$ then acts on $\tau$ as a composition of these maps, i.e., $U_{n,m}(\tau)=\tfrac{n\tau+\gamma}{m\tau+\delta}=M_{a_1}\circ M_{a_2}\circ \cdots \circ M_{a_r}(\tau)$. At $\tau=\infty$ this gives 
\begin{align}\label{aiCFD}
\frac{n}{m} &= M_{a_1}\circ M_{a_2}\circ \cdots \circ M_{a_r}(\infty)\notag\\
&= a_1-\frac{1}{a_2-\frac{1}{\ddots-\frac{1}{a_r}}}\,,
\end{align}
showing that the set of allowed $a_i$ forms a continued fraction (CF)\footnote{For further connections between knots and continued fraction see, e.g., \cite{CONWAY1970329}.} decomposition of $\tfrac{n}{m}$. Using the standard \q{all plus}~notation for continued fractions
\begin{align}\label{CFdef}
 [b_1;b_2,\ldots,b_r] \equiv b_1+\frac{1}{b_2+\frac{1}{\ddots+\frac{1}{b_r}}}\,,
\end{align}
one can write
\begin{align}\label{nmCF}
\frac{n}{m} &= [b_1;b_2,\ldots,b_r] \qquad \text{with}\qquad b_i=(-1)^{i+1}a_i
\end{align}
(in other words, $b_i=a_i$ for odd $i$ and $b_i=-a_i$ for even $i$). 

It is crucial to note that for a fixed $\tfrac{n}{m}$ there are infinitely many different CF decompositions \eq{nmCF}, since a priori no restriction is put on the integers $a_i$. For instance, if we declare the $a_i$ to be strictly positive (meaning that the use of the $T^{-1}$ generator is forbidden) we can write $\tfrac{3}{2}=[2;-2]$ so that $U_{3,2}=T^2ST^2S=\left(\begin{smallmatrix}3 &-2\\ 2 &-1\end{smallmatrix}\right)$ is a word of length $6$ producing the trefoil knot; on the other hand, if negative $a_i$'s are allowed we could write $\tfrac{3}{2}=[1;2]$ and get the shorter word $U_{3,2}=TST^{-2}S=\left(\begin{smallmatrix}3 &1\\ 2 &1\end{smallmatrix}\right)$ of length $5$ that also does the job. More generally, equivalent CF expansions exist that do not even have the same number of terms (e.g., $\tfrac{3}{2}=[-2;1,-1,-2,-2]$). The shortest word corresponds to the particular choice(s) of CF that minimizes \eq{lengthU}, that is,
\begin{align}\label{minlengthU}
\big|\big|U^{\text{min}}_{n,m}\big|\big| &\equiv \min_{r,\{a_i\}}\,\sum_{i=1}^{r} \big(|a_i|+1\big) = \min_{[b_1;b_2,\ldots,b_r]}\,\sum_{i=1}^{r} \big(|b_i|+1\big)\,.
\end{align}
Notice that there is an interplay between $r$ (the total number of terms in the CF) and the absolute values of the CF coefficients themselves, which makes the minimization procedure tricky: we want a CF with not too many terms, while at the same time keeping the coefficients sufficiently small. 

In principle there could be many CFs yielding this minimal length (there is no reason why it should be unique). For our purposes it is enough to find one of them, since we are only interested in the value of $||U^{\text{min}}_{n,m}||$ itself. 
A possible strategy would be to ignore the fact that $r$ and $\{b_i\}$ are related and minimize first over the number of terms $r$ to later worry about the $b_i$. Continued fractions with the least number of terms (so-called \emph{geodesic continued fractions}) have been discussed in \cite{beardon2012}, where a prescription is given to construct a particular geodesic CF using the so-called \emph{ancestral path} from $\tfrac{n}{m}$ to $\infty$ on the Farey graph. The problem here is that in general there are multiple geodesic CFs\footnote{Namely, there are at most $F_r$ (the $r$-th Fibonacci number) geodesic CFs with value $x$, where $r$ is the minimal number of terms needed to expand the rational $x$ \cite{beardon2012}.} and it is not clear how to carry out the minimization over coefficients $b_i$ within this set of geodesic CFs (the exception here is when $|b_i|\ge3$ for all $i\ge2$, in which case the ancestral path CF constructed in \cite{beardon2012} is the unique geodesic one and therefore minimizing over $b_i$ is trivial). Even though we have compelling numerical evidence that this ancestral path CF indeed minimizes the length \eq{minlengthU}, here we adopt a different (simpler) strategy and prove the following:

\begin{proposition}
\label{prop1}
For $n>m>0$, the minimal word length \eq{minlengthU} is achieved for the regular continued fraction representation of $\tfrac{n}{m}$ (i.e., the one for which $b_i>0$ for all $i=1,\ldots,r$) with $b_r>1$. This representation is unique.\footnote{For an arbitrary (positive or negative) rational $x$ this does not work. Even though we shall not need it here, we conjecture that in this case $C=\sum_{i=1}^r(|b_i|+1)$ is minimized by the continued fraction $x=[b_1,\ldots,b_r]$ obtained by modifying the Euclidean algorithm as follows:
\begin{enumerate}
 \item[$i)$]{If $k\le x\le k+\tfrac{1}{2}$ for some integer $k$, then it is the standard Euclidean continued fraction of $x$;}
 \item[$ii)$]{If $k-\tfrac{1}{2}<x<k$ for some integer $k$, then it is given by $[-b_1^*,\ldots,-b_r^*]$, where $[b_1^*,\ldots,b_r^*]$ is the Euclidean continued fraction of $-x$.}
\end{enumerate}
E.g., for $x=-\tfrac{1}{4}$ the Euclidean continued fraction $[-1,1,3]$ gives $C=8$ while the modified one yields $[0,-4]$ which has $C=6$. The proof should parallel the one given in the text after noting that this modified continued fraction has all coefficients of the same sign (except perhaps the first). We thank Ian Short for help on that.} 
\end{proposition}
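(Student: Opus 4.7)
My plan is to prove Proposition~\ref{prop1} in two parts: uniqueness of the normalized RCF, then minimality of its length by strong induction on the denominator $m$.

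Uniqueness of the RCF with $b_r > 1$ follows directly from the Euclidean algorithm applied to the coprime pair $(n,m)$: the integer quotients $b_1,\ldots,b_r$ are uniquely determined, with $b_1 = \lfloor n/m \rfloor \geq 1$ since $n > m$ and the last nonzero remainder equal to $1$ since $\gcd(n,m)=1$, forcing $b_r \geq 2$ whenever $r > 1$; the condition $b_r > 1$ is precisely what excludes the redundant alternative $[b_1,\ldots,b_r-1,1]$. For minimality, denote $C(n/m) = \sum_{i=1}^r (b_i + 1)$, which satisfies the Euclidean recursion $C(n/m) = (b_1+1) + C(m/r_1)$ for $m > 1$ (with $r_1 = n - b_1 m \in (0,m)$) and base case $C(n/1) = n+1$. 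I would show that any CF $[c_1;c_2,\ldots,c_t] = n/m$ has cost $\sum (|c_i|+1) \geq C(n/m)$ by induction on $m$. The base case $m=1$, where $n/m = n \geq 2$, reduces to a short analysis of integer-valued CFs showing that the singleton $[n]$ of cost $n+1$ is optimal. In the inductive step, the tail $[c_2;\ldots,c_t]$ evaluates to $m/(n-c_1 m)$; when $c_1 = b_1$ this equals $m/r_1$ with $r_1 < m$, so the inductive hypothesis gives $\sum_{i=2}^t (|c_i|+1) \geq C(m/r_1)$ and the total cost is at least $(b_1+1) + C(m/r_1) = C(n/m)$.

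The main obstacle is the case $c_1 \neq b_1$, where the tail represents a rational whose denominator may exceed $m$ or be negative, so the induction hypothesis does not apply directly. My plan is to reduce to the aligned case $c_1 = b_1$ via a finite catalog of local rewriting identities, each verified by elementary algebra to be length non-increasing. A representative move is
\begin{equation*}
[c_1;-1,c_3,c_4,\ldots,c_t] \;=\; [c_1-1;\,1-c_3,\,-c_4,\,\ldots,\,-c_t],
\end{equation*}
whose length change is $(|c_1-1|-|c_1|) + (|c_3-1|-|c_3|) - 2 \in \{-4,-2,0\}$, never positive. An analogous identity handles $c_2 = +1$, and the sign-flip involution $[c_1;c_2,\ldots] \mapsto [-c_1;-c_2,\ldots]$ (which preserves cost) completes the catalog. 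Iterating these moves, together with integer shifts of $c_1$, brings any CF into the aligned form $c_1 = b_1$ without ever increasing the cost, closing the induction. The technical crux is verifying that the catalog is exhaustive and that the iterative reduction terminates, which I would establish by a well-ordering argument on the tuple $(|c_1-b_1|,t)$.
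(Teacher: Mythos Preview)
Your overall strategy is different from the paper's, and in principle an induction on the denominator could work, but as written the proposal has a real gap in the inductive step.

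The problematic case is exactly the one you isolate: $c_1\neq b_1$. Your catalog of local moves only handles $c_2=\pm 1$, and the ``sign-flip involution'' $[c_1;c_2,\ldots]\mapsto[-c_1;-c_2,\ldots]$ negates the \emph{value} of the continued fraction, so it cannot be used to rewrite a CF for $n/m$ as another CF for the same $n/m$. Nothing in the catalog addresses the generic situation $c_1\neq b_1$ with $|c_2|\geq 2$. For instance, $\tfrac{8}{5}=[2;-2,-2]$ has $c_1=b_1+1=2$ and $c_2=-2$; none of your listed moves applies, yet this CF already attains the minimal cost $9$, so the reduction must be delicate. More generally, if $c_1=b_1+1$ the tail represents $-m/(m-r_1)$, and closing the induction would require the inequality $C\!\big(m/(m-r_1)\big)\geq C(m/r_1)-1$, which is nontrivial (and tight in the example above). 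The phrase ``integer shifts of $c_1$'' does not correspond to any length-non-increasing CF identity, and the well-ordering on $(|c_1-b_1|,t)$ is only asserted. The base case $m=1$ (any integer-valued CF has cost $\geq n+1$) is also not entirely trivial, though it is more tractable.

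By contrast, the paper's argument avoids all of this by reversing the logic: it \emph{starts} from an assumed minimal CF $[c_1,\ldots,c_r]$ and applies the two moves
\[
[\ldots,c_i,c_{i+1},\ldots]=[\ldots,c_i\mp 1,\pm 1,-c_{i+1}\mp 1,-(\ldots)]
\]
which are length-\emph{preserving} precisely when the signs of $c_i,c_{i+1}$ differ. Sweeping left to right, every sign change can be removed without changing the length, yielding an all-positive CF; uniqueness then forces it to be the RCF. Because the moves are applied to a CF already known to be minimal, one never needs to compare costs across different rationals or to prove a non-increase --- length preservation at sign changes is enough. This sidesteps both the exhaustiveness issue and the induction entirely.
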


\noindent First of all, let us recall that the regular (or Euclidean) continued fraction representation $[b_1,\ldots,b_r]$ of a rational number $x$ is defined by the recurrence relation
\begin{align}\label{Euclid}
b_i &= \lfloor f_i\rfloor\notag,\\
\quad f_i&=\frac{1}{f_{i-1}-b_{i-1}}\quad (i=2,\ldots,r)\,,\notag\\
\qquad f_1 &= x,
\end{align}
where $\lfloor \cdot\rfloor$ is the floor function. The construction implements the Euclidean algorithm for finding the greatest common divisor of two integers $n$ and $m$. 
It is clear that its coefficients are all positive (except perhaps the first one, which vanishes when $0\le x<1$ or is negative when $x<0$). 

\begin{proof}[Proof of Proposition~\ref{prop1}] 
Suppose that $[c_1,\ldots,c_r]$ is a continued fraction decomposition of $\tfrac{n}{m}$ that minimizes \eq{minlengthU}. What we need to show is that we can modify this expansion without changing the length $||U^{\text{min}}_{n,m}||=\sum_{i=1}^{r} \big(|c_i|+1\big)$ in such a way that all the coefficients of the resulting continued fraction become positive. This can be done with the help of the following two identities
\begin{subequations}\label{moves}
\begin{align}
 [\cdots,c_i,c_{i+1},\cdots] &= [\cdots,c_i-1,1,-c_{i+1}-1,-(\cdots)] \label{move1}\\
 [\cdots,c_i,c_{i+1},\cdots] &= [\cdots,c_i+1,-1,-c_{i+1}+1,-(\cdots)] \label{move2}\,,
\end{align}
\end{subequations} 
where $-(\cdots)$ means that all the coefficients that previously appeared in $\cdots$ are to appear now with opposite sign. The fact that expressions \eq{moves} still give an expansion of the same $\tfrac{n}{m}$ can be seen using $[a_1,\ldots,a_j]=a_1+\tfrac{1}{[a_{2},\ldots,a_j]}$ after checking the simpler identities\footnote{In the language of $S$ and $T$ transformations these identities are simply a manifestation of the $(ST)^3=1$ group relation. E.g., the first one is equivalent to $T^aST^{-b}S=T^{a-1}(TST)T^{-b-1}S=T^{a-1}(ST^{-1}S)T^{-b-1}S$.} $[a,b]=[a-1,1,-b-1]$ and $[a,b]=[a+1,-1,-b+1]$. It is also straightforward to check that the move \eq{move1} preserves the length $||U^{\text{min}}_{n,m}||$ as long as $c_i>0$ and $c_{i+1}<0$, while \eq{move2} preserves the length for $c_i<0$ and $c_{i+1}>0$. 

There is a small subtlety in using \eq{moves} when either $c_i$ or $c_{i+1}$ take the values $\pm1$, in which case the identities may generate a vanishing coefficient. Whenever this happens we have to convention that the rearrangements $[\cdots,a,0,1,\cdots]\equiv[\cdots,a+1,\cdots]$ and $[\cdots,1,0,a,\cdots]\equiv[\cdots,1+a,\cdots]$ have been done (both of which are trivially checked). Then, with that in mind, we can do the following procedure:
\begin{enumerate}
 \item[$(i)$]{If $c_1>0$: whenever a negative coefficient $(c_{i+1}<0)$ occurs in $[c_1,\ldots,c_r]$ we eliminate it by applying the move \eq{move1}. By repeated application from the left to the right, we 
     eliminate all the negative coefficients and obtain a new expression $[b_1,\ldots,b_s]$ having only positive $b_i$'s;}
 \item[$(ii)$]{If $c_1\le0$: first apply the move \eq{move2} sufficiently many times on the first pair of coefficients such that the resulting continued fraction gets a positive term in the first position. Then repeat the procedure in $(i)$ to make all the remaining ones positive as well.}
\end{enumerate}
This procedure converts any minimal $[c_1,\ldots,c_r]$ into a continued fraction with positive coefficients $b_i$ without changing $||U^{\text{min}}_{n,m}||$. It remains to show that such a decomposition can always be chosen with $b_r>1$, and in this case it is unique. 

We first notice that there is an ambiguity in the definition of the last term since $b_r$ can always be written as $(b_r-1)+\tfrac{1}{1}$, namely $[b_1,\ldots,b_r]=[b_1,\ldots,b_r-1,1]$. We would like to show that if we define all positive continued fractions (except the trivial case of $\tfrac{1}{1}$) in such a way that $b_r>1$, then such a presentation is unique. To see that, consider two decompositions of the same $\tfrac{n}{m}$,
\be
b_1+\frac{1}{[b_2,\ldots,b_r]} \ =\ \tilde{b}_1+\frac{1}{[\tilde{b}_2,\ldots,\tilde{b}_s]}\,,
\ee
where $b_i$ and $\tilde{b}_i$ are strictly positive integers with $b_r>1$ and $\tilde{b}_s>1$. First notice that the latter fact means that $[b_2,\ldots,b_r]>1$ and $[\tilde{b}_2,\ldots,\tilde{b}_s]>1$, which implies that $\tfrac{1}{[b_2,\ldots,b_r]}$ and $\tfrac{1}{[\tilde{b}_2,\ldots,\tilde{b}_s]}$ are not integers and therefore the equality is only possible if $b_1=\tilde{b}_1$ and $[b_2,\ldots,b_r]=[\tilde{b}_2,\ldots,\tilde{b}_s]$. By recursively applying this argument to the fractions $[b_2,\ldots,b_r]$ and $[\tilde{b}_2,\ldots,\tilde{b}_s]$, one concludes that $s=r$ and $b_i=\tilde{b}_i$ for all $i$. This proves Proposition 
\ref{prop1}.
\end{proof}

For future reference, let us also leave stated here the alternative proposition, for which we have no proof at the moment (only supporting numerical evidence):
\begin{proposition}
\label{prop2}
The minimal word length \eq{minlengthU} is also achieved for the geodesic continued fraction constructed in \cite{beardon2012} based on the ancestral path between $\tfrac{n}{m}$ and $\infty$ in the Farey graph. 
\end{proposition}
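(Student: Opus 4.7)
The plan is to prove Proposition~\ref{prop2} by transforming the Euclidean continued fraction $[b_1,\ldots,b_r]$ of $\tfrac{n}{m}$ -- which is optimal by Proposition~\ref{prop1} -- into the ancestral path CF of~\cite{beardon2012} through a sequence of length-preserving moves. The key identity I would exploit, a specialization of~\eq{move1} to the case $c_i=1$, is the ``compression''
\begin{equation}\label{compress}
[\ldots,a,1,b,e_1,e_2,\ldots] \;=\; [\ldots,a+1,-(b+1),-e_1,-e_2,\ldots]\,,
\end{equation}
valid for $a,b\ge 1$, which removes an interior $+1$, increases the absolute values of its two neighbours by one, and flips the signs of all subsequent coefficients. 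The analogue from~\eq{move2}, namely $[\ldots,A,-1,B,e_1,\ldots]=[\ldots,A-1,1-B,-e_1,\ldots]$ for $A,B\le 0$, handles interior $-1$'s. A short direct check shows that both moves preserve the word length~\eq{lengthU}: the three entries $\{a,\pm 1,b\}$ contribute $|a|+|b|+5$ to $\sum(|c_i|+1)$, while the two entries replacing them contribute $(|a|+1)+(|b|+1)+2$, exactly compensating the drop of one in the number of terms, and the sign-flipped tail contributes the same $\sum(|e_j|+1)$.

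The strategy is then algorithmic. Starting from the Euclidean CF (all positive), I would scan from left to right and, every time an interior coefficient of absolute value $1$ appears, apply~\eq{compress} or its $-1$ analogue as dictated by the sign at that position. Each step strictly reduces the number of terms, so the procedure terminates at some $[d_1,\ldots,d_s]$ with $|d_i|\ge 2$ for every $i\ge 2$. By construction this CF represents the same rational $\tfrac{n}{m}$ and satisfies $\sum_{i=1}^s(|d_i|+1)=||U^{\text{min}}_{n,m}||$. Moreover, since a continued fraction with $|d_i|\ge 2$ for $i\ge 2$ encodes a simple (non-backtracking) path on the Farey graph, $s$ coincides with the Farey graph distance $d(\tfrac{n}{m},\infty)$, so $[d_1,\ldots,d_s]$ is automatically geodesic.

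The hardest step will be to identify $[d_1,\ldots,d_s]$ with the ancestral path CF constructed in~\cite{beardon2012}, or at least to show that the two have equal word length. When $|d_i|\ge 3$ for every $i\ge 2$ the ancestral path CF is the unique geodesic (as recalled in the footnote after Proposition~\ref{prop1}), so the identification is automatic. The delicate case is when some $|d_i|=2$, where several distinct geodesic CFs can coexist. A geometric line of attack would translate each application of~\eq{compress} into a local move on the Farey graph and verify that iterating it from the Euclidean CF follows the very geodesic that Beardon selects by repeatedly replacing a rational by its Farey parent in the Stern-Brocot tree. An algebraic alternative would induct on $r$ using the recursion $\tfrac{n}{m}=b_1+\tfrac{m}{\,n-b_1 m\,}$ and show that both constructions satisfy the same recursion. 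In either route, the main obstacle is the careful bookkeeping of the propagating sign flips together with the non-uniqueness of geodesic CFs when coefficients of absolute value $2$ appear, and this is essentially the reason the proposition is left here as a conjecture.
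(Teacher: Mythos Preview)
The paper does not prove Proposition~\ref{prop2}: it is explicitly stated there as a conjecture supported only by numerical evidence, with the remark that ``for all practical matters, a proof is not needed'' since the value of $||U^{\text{min}}_{n,m}||$ can already be computed via Proposition~\ref{prop1}. So there is no proof in the paper to compare against.

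Your strategy is a reasonable line of attack, and the length-preserving compression moves are correct (your count $|a|+|b|+5$ should read $|a|+|b|+4$, but the conclusion is unaffected). The algorithm does terminate at some geodesic continued fraction $[d_1,\ldots,d_s]$ with $|d_i|\ge 2$ for $i\ge 2$ and word length equal to $||U^{\text{min}}_{n,m}||$. However, as you yourself flag, the argument is incomplete: you construct \emph{some} geodesic CF achieving the minimum, whereas Proposition~\ref{prop2} concerns the \emph{specific} geodesic CF given by Beardon's ancestral path construction. The shortcut ``my output is geodesic, Beardon's is geodesic, hence they have equal word length'' is not available, because distinct geodesic CFs of the same rational can have different word lengths. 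For instance, $\tfrac{5}{2}$ admits the two geodesic (length-$2$) CFs $[2,2]$ and $[3,-2]$, whose word lengths are $6$ and $7$ respectively. So one really must identify the compressed CF produced by your algorithm with Beardon's ancestral path CF (or give an independent argument that the ancestral path construction is itself related to the Euclidean CF by length-preserving moves). That identification is precisely the gap you name at the end, and it is the genuine obstacle; the paper leaves it open as well.
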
 
\noindent For all practical matters, a proof is not needed since all we need here is the value of $||U^{\text{min}}_{n,m}||$, which can be computed using the simpler prescription of Proposition \ref{prop1}.

\vspace{.2cm}

We are now ready to discuss the topological complexity of torus knots. Recall from the previous section that every torus diffeomorphism $U_{n,m}$ producing the $\Kc_{n,m}$ torus knot from the unknot can be decomposed into a sequence of $S$ and $T$ transformations of the form $U_{n,m}=T^{a_1}ST^{a_2}S\ldots T^{a_r}S$. This $P\SL$ word is not unique, and the topological complexity of $\Kc_{n,m}$ corresponds to the length of the shortest possible one, $C_{n,m}=||U^{\text{min}}_{n,m}||$, which is the minimal number of transformations required to produce the desired knot. It is clear from Proposition \ref{prop1} above that $U^\text{min}_{n,m}=T^{a_1}ST^{a_2}S\ldots T^{a_r}S$ with $a_i=(-1)^{i+1}b_i$ and $b_i$ the Euclidean CF coefficients, so that
 \begin{align}\label{Cnm}
     C_{n,m}= \sum_{i=1}^{r} b_i+r+|f|\,
 \end{align}
where each factor $T^{a_i}S$ contributes with $|a_i|+1=b_i+1$ and the dependence on the particular knot $\Kc_{n,m}$ is encoded in both $r$ and $\{b_i\}$. Here we have jumped a bit ahead and added the $+|f|$ contribution due to framing of the initial unknot ($f$ is the self-linking number), which we now make a brief pause to clarify. 

When discussing the optimal circuit above, we have implicitly assumed that the knot $\Kc_{n,m}$ has trivial framing ($f=0$), which is not in general the case since no natural choice of framing exists. This ambiguity in the choice of framing affects many physical quantities, including the complexity. Fortunately, the way in which it affects $C_{n,m}$ is very simple: a knot $\Kc$ with $f$ units of framing and its version with trivial framing are mapped into each other by the \emph{$f$-fold Dehn twist} generated by $T^f$. This can be immediately seen in the $U(1)_k$ case from the transformation rule \eq{framing} of the wave function under a shift of framing, where the phase factor $\exp(2\pi\ii f\,j^2/k)$ picked up by the state is nothing but the matrix element of $\Tc^f$ that represents this Dehn twist in the Hilbert space, where $\Tc$ is shown in \eq{STU(1)}. Therefore, if $U_{n,m}^\text{min}$ is the minimal word building the trivially framed $\Kc_{n,m}$, it is clear that $U_{n,m}^\text{min}T^f$ is the corresponding one for the framed knot. Since in principle nothing prevents $f$ from being negative, this explains the $|f|$ extra units of complexity appearing in \eq{Cnm}. Any further change of framing by $t$ units as in \eq{framing} will add or subtract $t$ units of complexity to $C_{n,m}$.

In Table \tab{Cnmexplicit} we illustrate expression \eq{Cnm} explicitly for a number of particular torus knots $\Kc_{n,m}$, namely the ones for which the continued fraction decomposition of $\frac{n}{m}$ has up to 4 coefficients. This allows us to easily identify families of knots having the same topological complexity. For instance, it is clear from the first line that $C_{1+b_1b_2,b_2}=C_{1+b_1b_2,b_1}$ for all $b_1,b_2>1$, with analogous conclusions obtained from any permutation of indices in the second and third lines. Similarly, one can compare elements from different lines and identify further knots that are equally complex. 
A detailed exploration of these symmetries reveals that the vast majority of knots have a quite moderate topological complexity, as shown in Figure \fig{cpxtdistr}. It also becomes evident that the complexity increases very slowly in comparison with the crossing number of the knot, which provides a good notion of how knotted the given knot is.

\begin{table}[htb]
\centering
\begin{tabular}{|c|c|c|}
\hline
$n$                 & $m$         & $C_{n,m}-|f|$ \\ \hline
$1+b_1b_2$              & $b_2$         & $b_1+b_2+2$     \\ \hline
$b_1+b_3(1+b_1b_2)$         & $1+b_3b_2$      & $b_1+b_2+b_3+3$   \\ \hline
$1+b_1b_2+b_4(b_1+b_3(1+b_1b_2))$ & $b_2+b_4(1+b_3b_2)$ & $b_1+b_2+b_3+b_4+4$ \\ \hline
\end{tabular}
\caption{The topological complexity \eq{Cnm} for the particular torus knots $\Kc_{n,m}$ for which the  regular continued fraction decomposition of $\frac{n}{m}$ has up to $4$ coefficients. The construction proceeds similarly for $r>4$. Recall that, according to our convention, $b_i\ge1$ for $i=1,\ldots,r-1$ while $b_r$ is strictly $>1$.}\label{tab:Cnmexplicit}
\end{table}

\begin{figure}[thb]
    \centering
    \includegraphics[width=.65\textwidth]{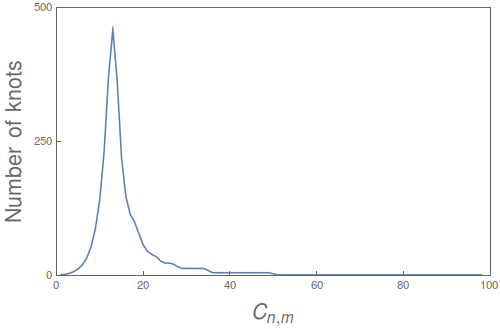}
    \caption{A statistical analysis of the topological complexity \eq{Cnm} for the first three thousand torus knots (as told by their crossing number), ignoring the framing contribution. While the crossing number of knots in the sample reaches up to $10^4$, $C_{n,m}$ increases much more slowly, reaching a maximum of $100$. In particular, the vast majority of states in the sample have moderate complexity (roughly between $10$ and $20$).}
    \label{fig:cpxtdistr}
\end{figure}

In Figure \fig{complexity-plot}, we analyze the behaviour of the topological complexity $C_{n,m}$  as a function of $n$ for different values of $m$.
\begin{figure}[ht]
  \centering
  \includegraphics[width=.65\textwidth]{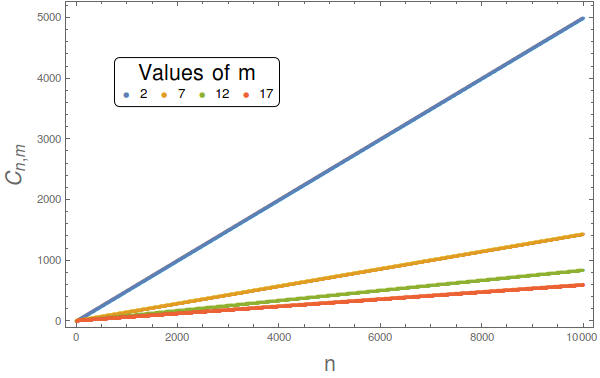}
  \caption{Topological complexity $C_{n,m}$ of torus knot states as a function of $n$ for different values of $m$. }
  \label{fig:complexity-plot}
\end{figure}
For large enough $n$, the topological complexity asymptotes to 
\begin{align}
\label{eq:growth}
    C_{n,m} \sim  \frac{n}{m}\,.
\end{align}
However, a more careful analysis shows that this an approximate behavior, since the apparent straight lines in the plot actually exhibit an internal structure with a periodic pattern of fluctuations that becomes more noticeable as $m$ increases. The situation is illustrated in Figure \fig{large-scale-m30} for $m=30$: the apparent single line is made of two discrete series of points that sit along two parallel lines. For other values of $m$ this pattern can be different. Even though we have no algebraic explanation for these patterns, we checked that they have the same asymptotic slope.
\begin{figure}[ht]
  \centering
  \includegraphics[width=.65\textwidth]{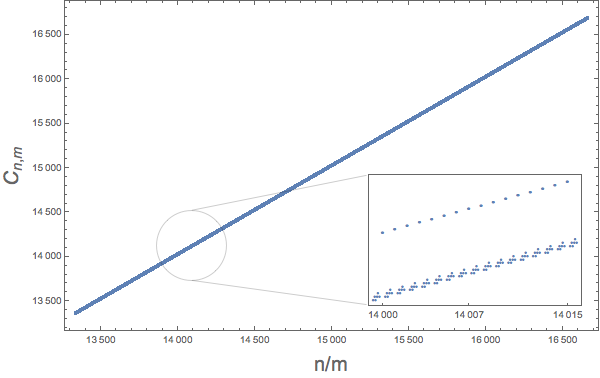}
  \caption{Large $n$ topological complexity for $m=30$. We zoom in a small interval to show its fine structure.}
  \label{fig:large-scale-m30}
\end{figure}

\subsection{Geometric interpretation}

Interestingly, the topological complexity \eq{Cnm} can be given a nice geometric interpretation in terms of the Farey tesselation of the upper half-plane \cite{Kulkarni}. This tesselation is a result of embedding the Farey graph $\mathcal{F}$ on the upper-half plane, whose vertices are the rational numbers (plus infinity) and whose edges join each pair of Farey neighbors (and only these). It is constructed as follows: one first draws vertical lines at each integer point $n=\frac{n}{1}\in\mathbb{Z}$, which reflects the fact that all integers are Farey neighbors of $\infty$; then joins pairs of adjacent points by a semicircle and generates a new rational point in between the two using the mediant formula $\tfrac{n_1}{m_1}\oplus\tfrac{n_2}{m_2}=\tfrac{n_1+n_2}{m_1+m_2}$. The procedure then follows recursively, and the result is shown by the black curves in Figure \fig{farey}. The resulting tiles are ideal triangles with vertices either at a rational number or at infinity, and edges along semicircles of different radii or along the vertical lines.\footnote{It can be proven that $i)$ two triangles are either equal or disjoint; $ii)$ every triangle is adjacent to exactly three other triangles; $iii)$ modular transformations map between different triangles, or, in other words, the Farey tesselation is invariant under $P\SL$.} One can also define the dual tree graph $\mathcal{F}^{*}$ of $\mathcal{F}$ by taking the baricenter of each ideal triangle as vertices and connecting them to the adjacent triangles, forming a trivalent tree.

The contribution from the sum of regular continued fraction coefficients $\sum_ib_i\equiv d_{n,m}$ in \eq{Cnm} is the \lq\lq geodesic distance\rq\rq~between the origin (representing the unknot) and the fraction $\frac{n}{m}$ that identifies $\Kc_{n,m}$. This geodesic distance $d_{n,m}$ is geometrically equivalent to the number of edges connecting $\frac{n}{m}$ to the origin along a semicircular path (a geodesic path in the upper-half plane) that goes through the ideal triangles, where the vertices correspond to intersections of this path with the curves in the Farey tesselation. See Figure \fig{farey}.

A related view of $\sum_ib_i$ can be given in terms of the Stern-Brocot tree of rational numbers. This is an infinite binary tree in which each vertex corresponds to a single positive rational number in its reduced form. It is constructed iteratively by starting at the zeroth level with the two extremal points $0=\frac{0}{1}$ and $\infty=\frac{1}{0}$ and, at each new level, introducing a new rational number in between every pair $\frac{n_1}{m_1}$ and $\frac{n_2}{m_2}$ of rationals in the previous level using the mediant formula mentioned above. The tree up to its fifth level is illustrated in Figure \fig{SBtree}. The sum $\sum_ib_i$ is the level (or \emph{depth}) in the Stern-Brocot tree of the node $\frac{n}{m}$, i.e., the number of edges connecting it to one of the root nodes on top. Let us emphasize here that two rationals having the same $\sum_ib_i$ does not imply that the corresponding knots have the same topological complexity, since \eq{Cnm} also contains a contribution from $r$. For instance, $\frac{4}{3}$ and $\frac{5}{3}$ both lie at the same depth in the Stern-Brocot tree ($\sum_ib_i=4$), but the corresponding torus knots $\Kc_{4,3}$ and $\Kc_{5,3}$ differ by one unit of complexity since the regular CF representation of $\frac{4}{3}=[1;3]$ has two coefficients while $\frac{5}{3}=[1;1,2]$ has three. Finally, we notice that the Stern-Brocot tree is isomorphic to the dual Farey graph $\mathcal{F}^{*}$. It has the mediants as vertices, but each mediant is simply connected to the baricenter of each its respective triangle. In this sense, the CF depth is equivalent to the number of triangles cut by the geodesic path of figure \ref{fig:farey}.

\begin{figure}[t]
    \centering    \includegraphics[width=.8\textwidth]{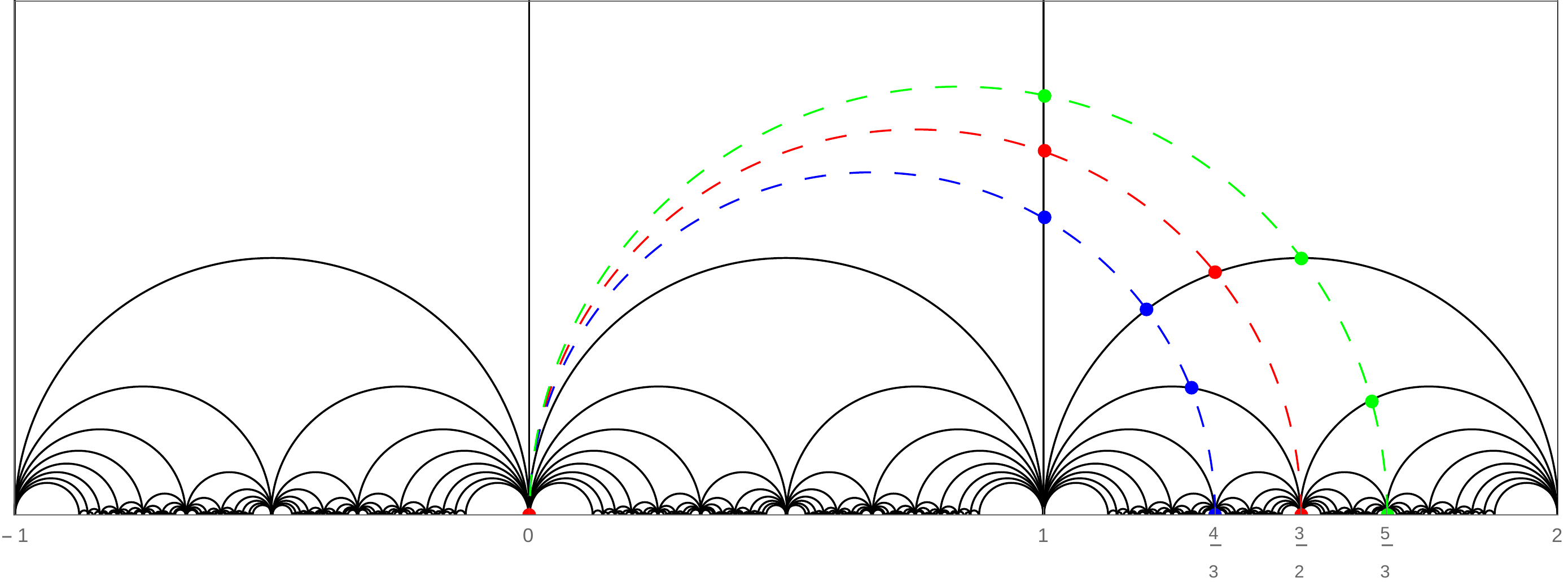}
    \caption{Geometric view of the sum of continued fraction coefficients contributing to \eq{Cnm} in the Farey tesselation of the hyperbolic plane (denoted by the black solid curves). 
    $\sum_ib_i\equiv d_{n,m}$ is the \lq\lq geodesic distance\rq\rq~between the origin (the unknot) and the fraction $\frac{n}{m}$ (the knot $\Kc_{n,m}$), i.e., the number of edges connecting them along a semicircular geodesic path that goes through the tiles. This is illustrated for $\Kc_{3,2},\Kc_{4,3}$, and $\Kc_{5,3}$ by the dashed red, blue, and green curves, respectively, which have $d_{3,2}=3$ and $d_{4,3}=d_{5,3}=4$.}
    \label{fig:farey}
\end{figure}

\begin{figure}[htb]
    \centering \includegraphics[width=.9\textwidth]{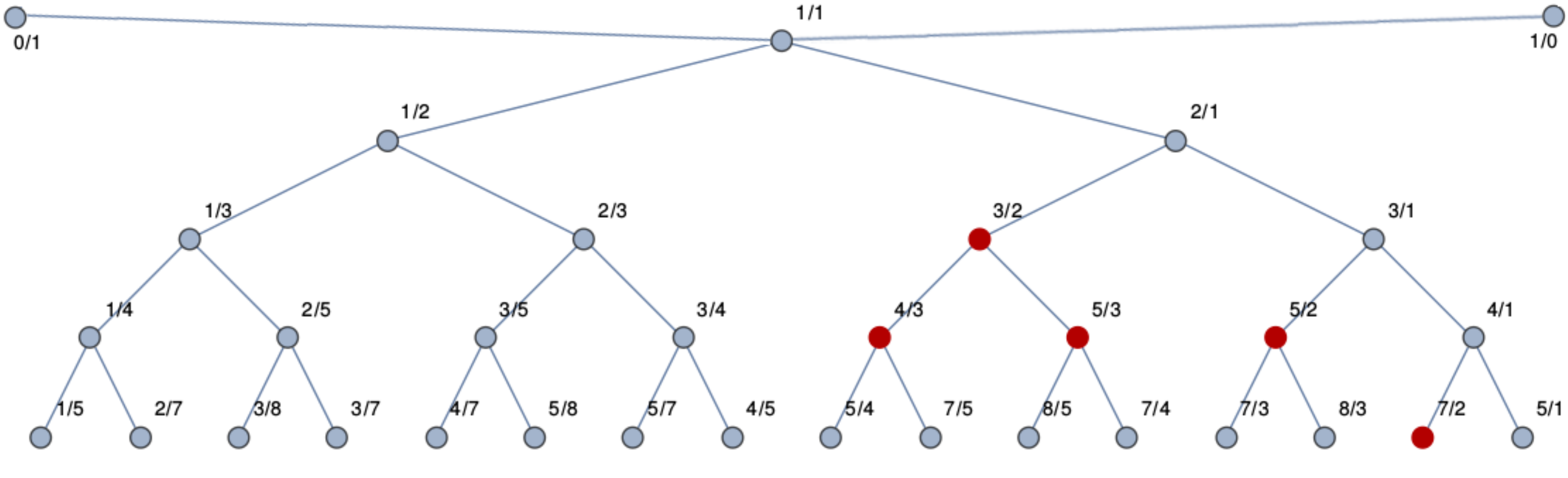}
    \caption{A view of the sum of continued fraction coefficients in the Stern-Brocot tree of positive rational numbers. Namely, $\sum_ib_i=d_{n,m}$ is the depth of the fraction $\frac{n}{m}$ in the tree. The red nodes highlight the locations of some of the torus knots illustrated in Figure \fig{torusknots}, namely $\Kc_{3,2},\Kc_{5,2},\Kc_{7,2},\Kc_{4,3}$, and $\Kc_{5,3}$ (the remaining ones appear in deeper levels not shown in the picture).}
    \label{fig:SBtree}
\end{figure}

The contribution from $r$ (the number of terms in the continued fraction) in \eq{Cnm} can also be seen geometrically in Figure \fig{farey}, but this time using \emph{paths along the Farey graph} itself (i.e., along the black solid curves as opposed to the dashed colored curves). The regular continued fraction $\tfrac{n}{m}=[b_1,\ldots,b_r]$ defines a connected path $\mathcal{P}_{n/m}=(\infty,C_1,\ldots,C_r=\tfrac{n}{m})$ in the Farey graph connecting $\infty$ to $\tfrac{n}{m}$ and whose vertices are the convergents $C_i\equiv[b_1,\ldots,b_i]$ ($i=1,\ldots,r$) of the continued fraction \cite{beardon2012}. Clearly, $r$ is the number of edges in this path.\footnote{Actually the same is true for any other integer continued fraction. The geodesic continued fractions (including the ancestral path in Proposition~\ref{prop2}) give the shortest of these paths.} For instance, the three examples of Figure \fig{farey} have $r_{4,3}=2,r_{3,2}=2$, and $r_{5,3}=3$ corresponding to the paths $\mathcal{P}_{4/3}=(\infty,1,\tfrac{4}{3}),\mathcal{P}_{3/2}=(\infty,1,\tfrac{3}{2})$, and $\mathcal{P}_{5/3}=(\infty,1,2,\tfrac{5}{3})$, respectively. It is clear that the paths above are not in general the ones that minimize $r$ (e.g., the path $(\infty,2,\tfrac{5}{3})$ is more efficient than $\mathcal{P}_{5/3}$), which reflects the fact that the regular continued fraction is not in general a geodesic continued fraction. 

Recently, the notion of circuit complexity, discrete groups and hyperbolic geometry has been discussed in \cite{Lin:2018cbk}. The main message there is that every finitely generated group $G$, with generating set $\mathcal{G}$, has a natural metric defined on its \emph{Cayley graph}. The vertices $g,h$ of this graph are elements of $G$ and we connect them by an edge if $gh^{-1} \in \mathcal{G}$.  The Cayley graph of the group of modular transformations on the torus, $P\SL$, is built from $S,T$ transformations (and its inverses) and its natural metric is exactly our topological complexity \eqref{Cnm}.

Summing up, the topological complexity of torus knots has two main geometric components (apart from the framing contribution $|f|$), one associated to the number of triangles, $\sum_ib_i$, and another to the number of edges of the Farey path from $ \infty$ to $\frac{n}{m}$, which is the CF size $r$. If we define $d_{\mathcal{F}}(a,b)$ and $d_{\mathcal{F}^{*}}(v_{a},v_{b})$ as the geodesic  metrics in the Farey graph and its dual tree respectively, where $a,b$ are rationals and $v_{a},v_{b}$ are ideal triangles with $a,b$ as mediants, it is clear that
\begin{align}
  \label{CayleyFarey}
    C_{n,m} \equiv d_{\text{Cayley}} =  d_{\text{Farey}^*}+d_{\text{Farey}}+ |f|,
\end{align}
where $d_{\text{Cayley}}$ is the word metric in the Cayley graph of
$S$ and $T$,
$d_{\text{Farey}^*}\equiv
d_{\mathcal{F}^*}\left(v_{1},v_{\frac{n}{m}}\right)+1 =\sum_ib_i $,
and
$d_{\text{Farey}} \equiv
d_{\mathcal{F}}\left(\infty,\frac{n}{m}\right)=r$. Notice that we have
to add one to the dual Farey metric as we need to
count the number of vertices, not just the number of edges.\footnote{For a tree graph, $V=E+1$, where $V$ is the number of vertices and $E$ is the number of edges.}

Now, let us set the framing to be zero. As we stated above, one
contribution to the topological complexity comes from the number of arcs in the Farey graph and another from the number of triangles along the
path. In our numerical tests, we observed an upper bound on
$d_{\mathcal{F}} \lesssim \log_{2}m+2$ for fixed $m$ up to $2000$,
which is consistent with bounds presented in
\cite{Schleimer2006} for paths in the Farey graph. This means that the
linear growth \eqref{eq:growth} for large $n$ must be only due to
$d_{\mathcal{F}^{*}}$, i.e., the number of triangles. This result
suggests a relation to the proposal of holographic subregion
complexity \cite{Alishahiha2015,Abt2018}, in which the complexity of a
subregion (in our case, the interval $[0,\frac{n}{m}]$) is
proportional to the volume bounded by a bulk geodesic. The normalized area below the geodesic in Figure~\fig{farey} grows as $n/2$ in the case $m=2$. However, despite the geometric similarity, our
holographic setup is more in the spirit of \cite{Manin2002},
as our geometric representation is for the moduli space of knot
states in Chern-Simons rather than the bulk spacetime itself.

\subsection{Complexity of Torus Knot States}

So far our analysis has been limited to the classical realm, namely to the topological complexity of the torus knots themselves as told by their topological properties encoded in the $S$ and $T$ torus diffeomorphisms needed to produce the knot. Let us now move on to discuss the circuit complexity of their corresponding quantum states in the torus Hilbert space of Chern-Simons theory with gauge group $G$ and level $k$. 

In Section 2 we mentioned that every torus diffeomorphism $U_{n,m}$
naturally defines a quantum circuit
$\Uc_{n,m}=\Tc^{a_1}\Sc\Tc^{a_2}\Sc\ldots\Tc^{a_r}\Sc$ based on $\Sc$
and $\Tc$ gates (the unitary representations of the $S$ and $T$
diffeomorphisms) acting on $\Hc(T^2;G,k)$. This quantum circuit can be
used to connect two different pairs of states on this Hilbert space,
namely the different basis vectors $|j_{n,m}\rangle$ and $|j\rangle$
(see \eq{jnm}) or, equivalently, the different torus knot complement
states $|\Kc_{n,m}\rangle$ and $|\Kc_{1,0}\rangle$ (see \eq{Knm}). Our
goal is to calculate the complexity of this circuit,
\begin{align}
  \Cc\big(|j_{n,m}\rangle,|j\rangle\big) = \Cc\big(|\Kc_{n,m}\rangle,|\Kc_{1,0}\rangle\big) \equiv \Cc_{n,m}\,,
\end{align}
which is the minimal number of gates required to generate the desired
unitary. We use the calligraphic $\Cc_{n,m}$ to distinguish this from the topological
complexity $C_{n,m}$ studied above. In other words, $\Cc_{n,m}$ is the length of the shortest
$\Sc\Tc$-word representation of $\Uc_{n,m}$ (the optimal circuit
$\Uc_{n,m}^\text{opt}$) in the unitary modular representation.

The naive expectation is that the optimal circuit is just the Hilbert
space representation of the minimal word $U^{\text{min}}_{n,m}$ in the
group manifold obtained in the previous section. However, this fails
in general for the reasons stressed in Section \ref{sec:ambiguity},
which boil down to the fact that the unitaries $\Sc,\Tc$ satisfy
further matrix relations beyond the $P\SL$ group relations
$\Sc^2=(\Sc\Tc)^3=1$. As a result, it follows that, in general,
$\Uc_{n,m}^\text{opt}\neq\Tc^{a_1}\Sc\ldots\Tc^{a_r}\Sc$ where
$a_i=(-1)^{i+1}b_i$ and $b_i$ are the regular continued fraction
coefficients of $\tfrac{n}{m}$ as in Proposition \ref{prop1}. For
instance, consider the cases of $U(1)_k$ and $SU(2)_k$ Chern-Simons,
where an immediate constraint is $\Tc^{p}=1$ with $p=2k$ and
$p=4(k+2)$, respectively; it is clear that, whenever any $a_i$
appearing in a group word
$U_{n,m}=T^{a_1}S\ldots T^{a_i}S\ldots T^{a_r}S$ is larger in modulus
than some multiple of $p$, that is $|a_i|\geq \ell p$ with
$\ell\in \mathbb{N}$, the Hilbert space representation of this
$U_{n,m}$ will be equivalent to
$\Uc_{n,m}=\Tc^{a_1}\Sc\Tc^{a_2}\Sc\ldots\Tc^{(-1)^{i+1}(|a_i|-\ell
  p)}\Sc\ldots\Tc^{a_r}\Sc$, which contains $\ell p$ less
generators.\footnote{To avoid any confusion, let us stress here that the resulting list of coefficients $[\tilde{a}_1;\tilde{a}_2,\ldots,\tilde{a}_r]$ no longer corresponds to a CF decomposition of the original fraction $\tfrac{n}{m}$.} Thus if $U_{n,m}$ is the minimal word
$U^{\text{min}}_{n,m}$, its Hilbert space representation
$\Uc^{\text{min}}_{n,m}$ (which is not necessarily the same as the
optimal circuit $\Uc^{\text{opt}}_{n,m}$) in general will be
reducible, and the circuit complexity $\Cc_{n,m}$ will end up being lower than the topological
complexity $C_{n,m}=||U^{\text{min}}_{n,m}||$. Therefore, all one can
say for generic gauge group $G$ and level $k$, without considering the extra constraints of the quantum representation, is that the
circuit complexity $\Cc_{n,m}=||\Uc^\text{opt}_{n,m}||$ has an upper
bound given by the topological complexity \eq{Cnm}, i.e.,
\begin{align}
\label{Cbound}
    \Cc_{n,m} \le C_{n,m}\,.
\end{align}

Fortunately, in the semiclassical limit $k\to\infty$ of Chern-Simons
theory, the above-mentioned subtleties disappear and the torus knot
states actually saturate this upper bound. Namely, in this case the
Hilbert space analysis parallels the $P\SL$ minimal word problem
solved by Proposition \ref{prop1}, the optimal circuit is simply
$\Uc_{n,m}^\text{opt}=\Tc^{a_1}\Sc\Tc^{a_2}\Sc\ldots\Tc^{a_r}\Sc$ and
the complexity of torus knot states $\Cc_{n,m}$ coincides with the
topological complexity of $\Kc_{n,m}$. In particular, in semiclassical
Chern-Simons theory all the plots, tables, and the geometric
interpretation of the topological complexity shown in previous section
extend to $\Cc_{n,m}$ as well. The equivalence between the quantum and
classical modular group representations in the large $k$ limit has been proved for $SU(2)$
\cite{Freedman2006,Marche2008} and for $SU(N)$
\cite{Andersen2006}. Moreover, the asymptotic limit of torus knot
states has been studied in \cite{Charles2015a,Charles2015}. The
conclusion is that torus knot states in this case are classified in the same way as the topological knots.

\section{Generalizations}
\label{sec:general}

\subsection{Connected sums of torus knots}
\label{sec:links}

The results are easily generalized for multi-component \emph{links} of the type
\begin{align}\label{Lnm}
 \Lc_{(n_1,m_1),\ldots,(n_N,m_N)}\equiv\Kc_{n_1,m_1} + 2^2_1 + \Kc_{n_2,m_2} + 2^2_1 + \cdots + \Kc_{n_L,m_L}\,,
\end{align}
where the plus sign here indicates the operation of connected sum of knots and $2^2_1$ denotes the Hopf link (in Rolfsen's notation), namely the simplest possible two-component link made of two unknots linked exactly once. In words, what this means is that $\Lc_{(n_1,m_1),\ldots,(n_N,m_N)}$ is the $N$-component link obtained by sequentially \lq\lq Hopf-linking\rq\rq~the torus knots $\Kc_{n_i,m_i}$, as illustrated in Figure \fig{linkedknots}. The simplest representative is $\Lc_{(1,0),\ldots,(1,0)}$, which is a $N$-component generalization of the Hopf link (it reduces to the standard Hopf link when $N=2$) obtained by a chain of trivial knots linked with unit linking numbers. 
It is important to emphasize that the class of links above is very special due to its simple linking pattern. In particular, it does \emph{not} include the torus links $(n,m)$, which are also composed by multiple torus knots (namely, $N=\text{gcd}(n,m)$ copies of $\Kc_{\frac{n}{N},\frac{m}{N}}$) but whose linking structure is more intricate.

\begin{figure}[ht]
    \centering
    \includegraphics[width=.4\textwidth]{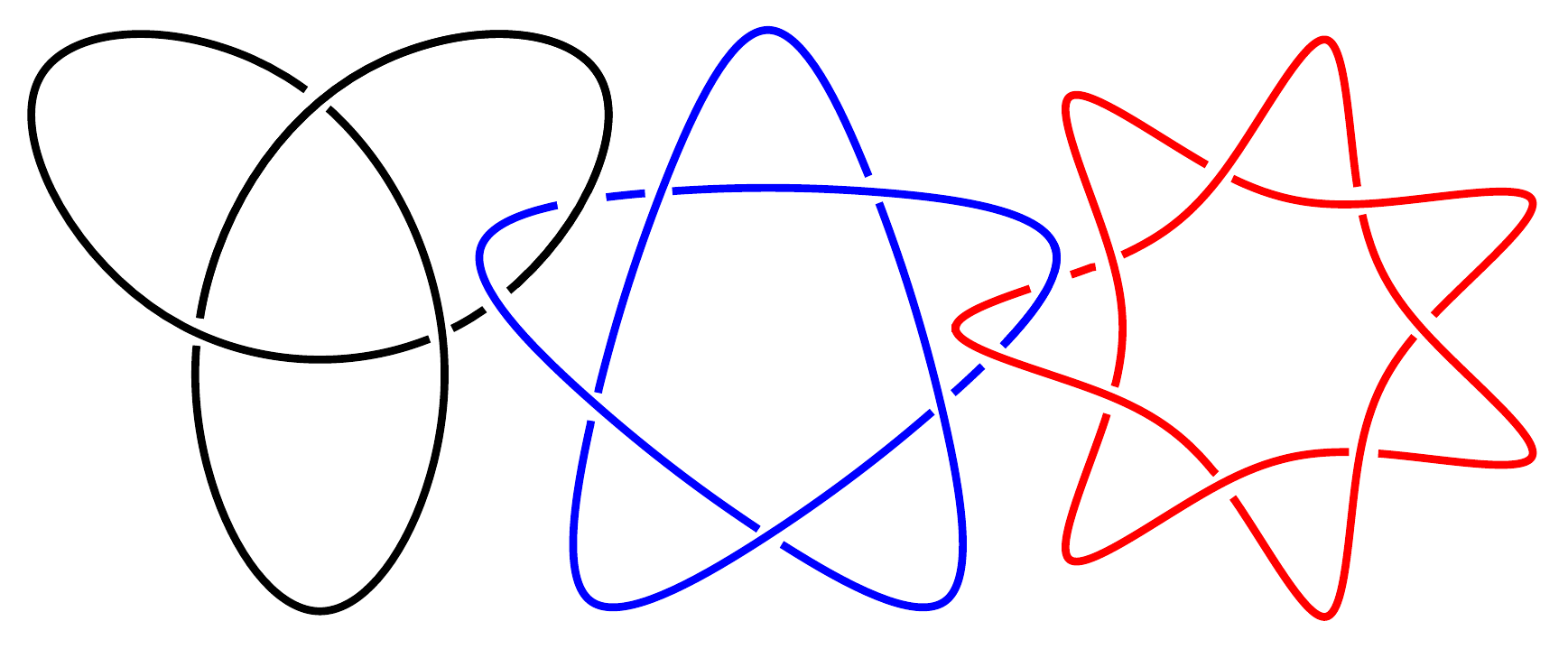}
    \caption{The link $\Lc_{(3,2),(5,2),(7,2)}$ obtained by \lq\lq Hopf-linking\rq\rq~$\Kc_{3,2}$ (black), $\Kc_{5,2}$ (blue), and $\Kc_{7,2}$ (red).}
    \label{fig:linkedknots}
\end{figure}

The story is then analogous to the one of knots in the previous section (we refer the reader to \cite{Balasubramanian:2016sro} for details). One can construct a $3$-manifold $\Mc$ that is the link-complement of \eq{Lnm} in $S^3$ and do the Chern-Simons path integral on it. This defines a state $\big|\Lc_{(n_1,m_1),\ldots,(n_N,m_N)}\big\rangle$ on the boundary of $\Mc$, which now consists in the disjoint union of $N$ tori, i.e., $\partial\Mc=\coprod_{i=1}^N T^2$. The corresponding Hilbert space is $\Hc(T^2;G,k)^{\otimes N}$, for which a natural basis is given by $|i_1,\ldots i_N\rangle$ and $|i_\alpha\rangle$ corresponds to a circular Wilson loop in the (integrable) representation $\Rc_i$ inside the $\alpha$-th torus. As in \eq{psi(K)}, the wave function of a link complement state in this basis is nothing but the Chern-Simons invariant \eq{W(L)} of the corresponding link. 

The $S$ and $T$ diffeomorphisms of the $\alpha$-th torus transform the unknot inside this particular torus into an arbitrary torus knot $\Kc_{n_\alpha,m_\alpha}$, and the link \eq{Lnm} is constructed from the generalized Hopf link by the word $U_{(n_1,m_1),\ldots,(n_N,m_N)}=\prod_\alpha U_{n_\alpha,m_\alpha}$ with $U_{n_\alpha,m_\alpha}$ an $ST$-word as in the previous section. As before, these $S$ and $T$ transformations naturally define unitary operators acting on the $\alpha$-th single-torus Hilbert space. These operators are again subject to representation constraints, contained in the kernel of the map from $P\SL$ to the space of operators on $\Hc(T^2;G,k)$. Modulo the identification, the operators construct the state corresponding to the link \eq{Lnm} from the one corresponding to the generalized Hopf-link using a unitary transformation $\Uc_{(n_1,m_1),\ldots,(n_N,m_N)}$, i.e.,
\begin{align}
\big|\Lc_{(n_1,m_1),\ldots,(n_N,m_N)}\big\rangle=\Uc_{(n_1,m_1),\ldots,(n_N,m_N)}\big|\Lc_{(1,0),\ldots,(1,0)}\big\rangle\,.
\end{align}
This circuit clearly factorizes into a product of circuits of the type \eq{Knm}, one for each component knot $\Kc_{n_\alpha,m_\alpha}$, 
\begin{align}
    \Uc_{(n_1,m_1),\ldots,(n_N,m_N)} = \Uc_{n_1,m_1}\ldots\Uc_{n_N,m_N}\,,
\end{align}
and therefore has a complexity given by the sum of the complexities \eq{Cnm} for each of these components, namely 
\begin{align}
    \Cc_{(n_1,m_1),\ldots,(n_N,m_N)} = \sum_{\alpha=1}^N\Cc_{n_\alpha,m_\alpha}\,.
\end{align}
Each component $\Cc_{n_\alpha,m_\alpha}$ is again bounded from above by the topological complexity $C_{n_\alpha,m_\alpha}$ of the corresponding knot $\Kc_{n_\alpha,m_\alpha}$, that is
\begin{align}
    \Cc_{n_\alpha,m_\alpha}\le C_{n_\alpha,m_\alpha}
\end{align}
with
\begin{align}
    C_{n_\alpha,m_\alpha} = \sum_{i_{\alpha}=1}^{r_{\alpha}} b_{i_{\alpha}}+r_\alpha+|f_\alpha|\,,
\end{align}
where $b_{i_\alpha}$ and $r_{\alpha}$ are the coefficients and size of the regular continued fraction decomposition of $\tfrac{n_\alpha}{m_\alpha}$ and $f_{\alpha}$ is the framing contribution.

\subsection{Rational knots and links}
\label{sec:2bridge}

The notions of complexity for torus knots and torus knot states studied above can be extended to other interesting examples of TQFT Hilbert spaces if one uses the presentation of knots and links as closures of braids. In this case, we can make sense of circuit complexity as the length of the minimal word of the representations of the braid group generators, while the topological complexity corresponds to the minimal word in the braid group itself. For an alternative definition of the complexity of braids, see \cite{Dynnikov2007}. 

This requires going beyond the torus Hilbert spaces studied in Section \Sec{CS}, so let us start by reminding what are the Hilbert spaces associated with 2-spheres in Chern-Simons~\cite{Witten:1988hf}. First, the Hilbert space $\Hc(S^2;G,k)$ of  Chern-Simons theory in a 3-manifold with a $S^2$ boundary is one-dimensional. This is not a very interesting example, since all the vectors in such a space differ only by a phase factor. In order to have a non-trivial Hilbert space one has to consider spheres with removed points (\emph{punctures}) which correspond to the endpoints of Wilson lines. Hence punctures are alike non-dynamical heavy charged particles and they carry representations of the group $G$ admissible by the value of $k$.

One cannot consider a sphere with a single puncture, since there would be no place for the Wilson line emanating from this point to end. For two punctures one can have a Wilson line connecting them, so the two endpoints should carry conjugated representations of $G$. Consequently, even in this case the Hilbert space remains one-dimensional. 

In order to have a non-trivial Hilbert space, one has to consider a sphere with at least three punctures. The representations carried by these punctures should be compatible, in the sense that their tensor product should contain a trivial representation, $R_\emptyset\in R_1\otimes R_2\otimes R_3$. One can think of this as the interaction of particles, in which two particles can fuse to produce a third one. The multiplicity of the trivial representation in the tensor product is then known as the fusion number $N_{R_1R_2}^{\overline{R}_3}$. These numbers also appear in the chiral algebra of the $SU(2)_k$ WZW theory \cite{VERLINDE1988360}. They define the dimension of the resulting Hilbert space,
\be
\dim\Hc\big(S^2\backslash\{R_1,R_2,R_3\};G,k\big)=N_{R_1R_2}^{\overline{R_3}}\,.
\ee

Below we will consider states in the Hilbert spaces of spheres with four punctures. The representations of the four points are subject to the condition $R_\emptyset\in R_1\otimes R_2\otimes R_3\otimes R_4$. The dimension of the Hilbert space counts all the compatible ways of pairwise fusing the representations. We can represent the basis vectors by the diagrams
\be
\label{cblocks}
|R_i\rangle \ = \ \begin{array}{c}
\includegraphics[scale=0.4]{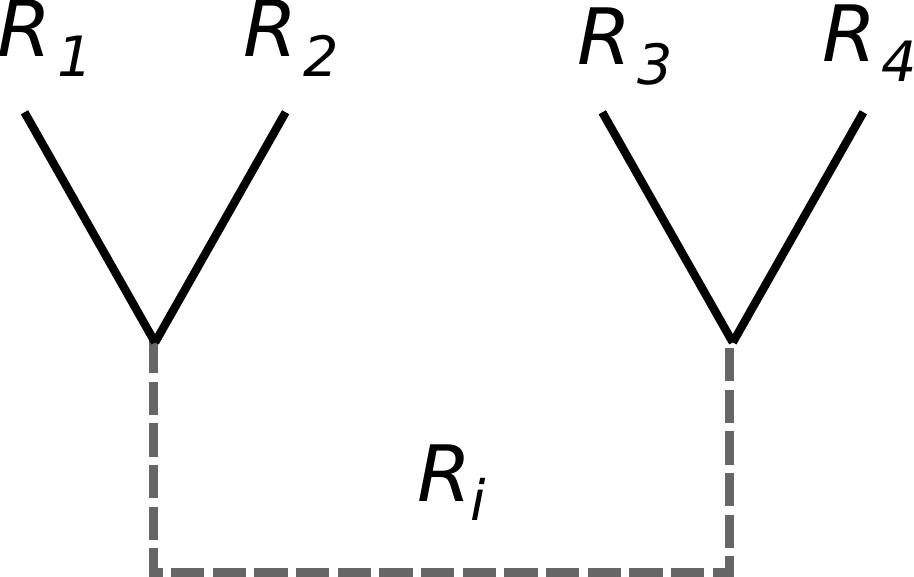}     
\end{array}
\qquad \text{or}
\qquad 
|\widetilde{R_j}\rangle \ = \ \begin{array}{c}
\includegraphics[scale=0.4]{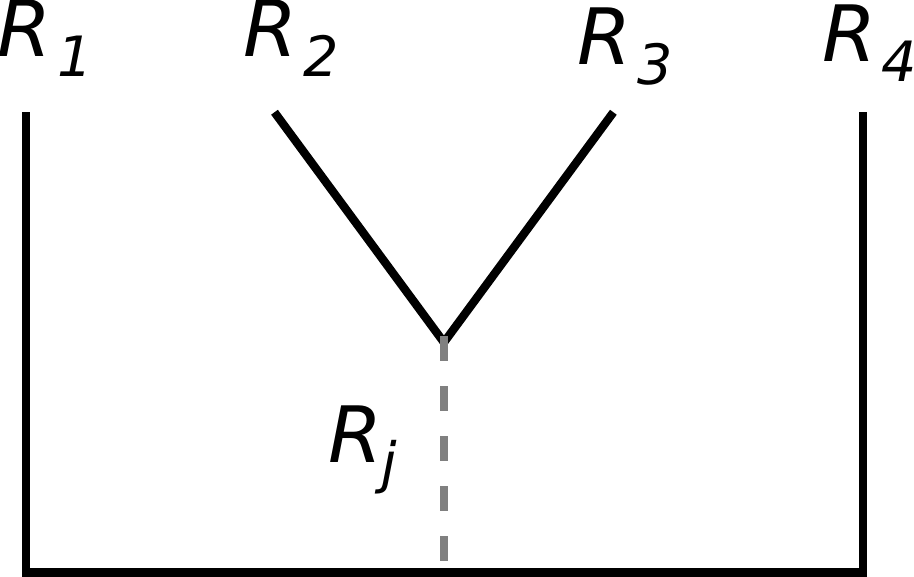}     
\end{array}\,.
\ee
Here $|R_i\rangle$ and $|\widetilde{R_j}\rangle$ correspond to two possible choices of the basis labeled by representations $R_i$, or $R_j$ (denoted by dashed lines) that can appear in the $s$ or $t$ fusion channels. To fix a reference state, let us choose $R_1=R_3=R$ and $R_2=R_4=\bar{R}$ in the left diagram.  We choose the reference state $|\Psi_\text{R}\rangle$ to correspond to trivial $R_i=\emptyset$, namely
\be
\label{Rstate}
|\Psi_\text{R}\rangle \ = \ \begin{array}{c}
\includegraphics[scale=0.4]{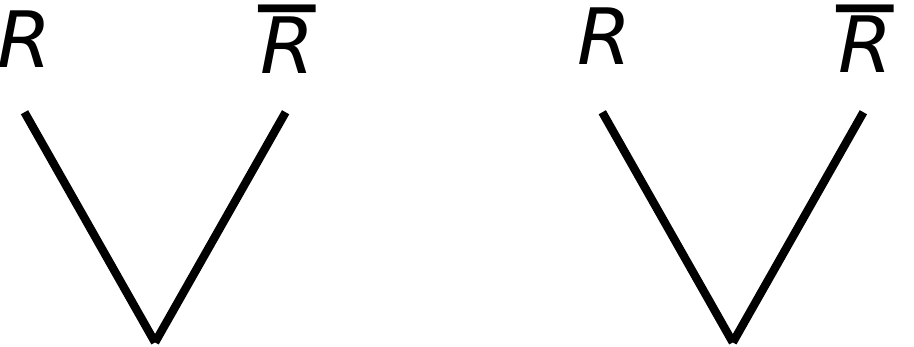}     
\end{array}\,,
\ee
where we understand that the line in a trivial representation is the same as no line at all.

The Hilbert space has a well-defined action of the braid group on it, which permutes the punctures on the sphere. The action of the braid group can be illustrated by a concatenation of a braid and the reference state:
\be\label{braidcircuit}
|\Psi_\text{T}\rangle \ = U\,|\Psi_\text{R}\rangle \ = \ \begin{array}{c}
\includegraphics[scale=0.4,angle=90]{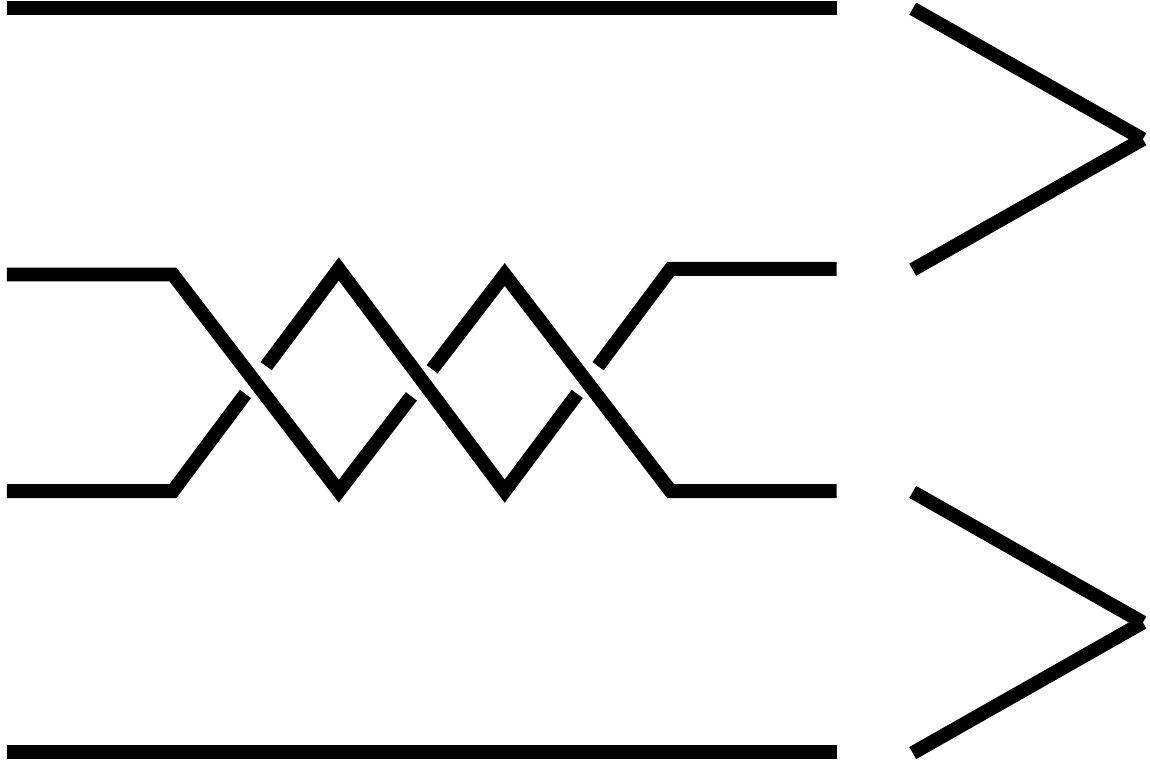}     
\end{array}
\ee
This illustrates the kind of target states one might be interested in. Note that, in this particular example, the state can be directly associated to a knot by closing a braid on the left by $\langle \Psi_\text{R}|$. The target state in the example above is represented by the trefoil (torus) knot.

The Hilbert space associated with four-punctured spheres is particularly suitable to the discussion of \emph{rational} (or \emph{2-bridge}) knots and links \cite{Murasugi,Duzhin:2010},\footnote{The case of generic knots and their minimal braid words will be addressed in a forthcoming work \cite{ongoing}.} for which many problems in knot theory can be completely solved. They are encoded by Artin's braid group of three elements, $B_3$. The crucial property for our purposes is the fact that $P\SL$ furnishes unitary representations of $B_3$, which allows us to take advantage of the results of the previous section. Indeed, one can easily check that
\begin{equation}
    \sigma_1= T\,,\qquad
    \sigma_2=STS\,
\end{equation}
satisfy the braid group relation 
\begin{equation}
    \sigma_1\sigma_2\sigma_1=\sigma_2\sigma_1\sigma_2\,.
\end{equation}
Together with the representation structure, this construction inherits the constraints, which potentially identify sets of states created by the braid group operators for the same reasons as discussed in Section \ref{sec:ambiguity}. Hence, in the quantum case, the complexity provides an upper bound, as in equation~(\ref{Cbound}), which is saturated by the topological complexity in the semiclassical limit $k\to\infty$.

Rational knots and links can be defined as closures of $B_3$ braids that are trivially embedded in $B_4$ (that is, the $\sigma_3$ generator of $B_4$ is never used in the braid word). There are two possible ways to close either end of a $B_4$ braid by connecting the strands pairwise: by the state $|\Psi_\text{R}\rangle$ in equation~\eq{Rstate}, or by an analogous state corresponding to trivial $R_j$ in the right diagram of equation~\eq{cblocks}. Consequently, there are two \q{bridges} connecting the strands and the obtained knots and links are also called 2-bridge.

This class of knots $L_{n,m}$ is again labelled by two coprime integers $(n,m)$ satisfying $n>0$ and $\big|\tfrac{n}{m}\big|\le1$ and can also be associated with continued fractions~\cite{Murasugi,Duzhin:2010}. For a fraction $\tfrac{n}{m}=[a_1,\ldots,a_r]$ with arbitrary integer coefficients as in \eq{CFdef}, one constructs the braid word $U_{n,m}=\sigma_2^{a_1}\sigma_1^{a_2}\sigma_2^{a_3}\cdots$. On the right end of the braid one always closes the strands by reference state~\eq{Rstate}, see Figure~\fig{close}. Notice that in the case of even $r$ the braiding starts with $\sigma_1^{a_r}$ which (within the convention above) is just a framing contribution. The choice of closure on the left end is uniquely fixed depending on the value of $a_1$: if $a_1=0$, meaning that the leftmost braiding operation is given by $\sigma_1$, then one must close the braid as in \fig{close}(b); otherwise, the leftmost operation is given by $\sigma_2$ and one must close the braid as in \fig{close}(a). This associates a unique knot diagram to every braid word. To be precise, a nice additional feature of the rational family is that apart from knots it also contains links: given $\frac{n}{m}$, the corresponding diagram is a knot if $m$ is odd, or a link when $m$ is even.
\begin{figure}[ht]
\centering
    \subfigure[]{\includegraphics[angle=90,width=.3\textwidth]{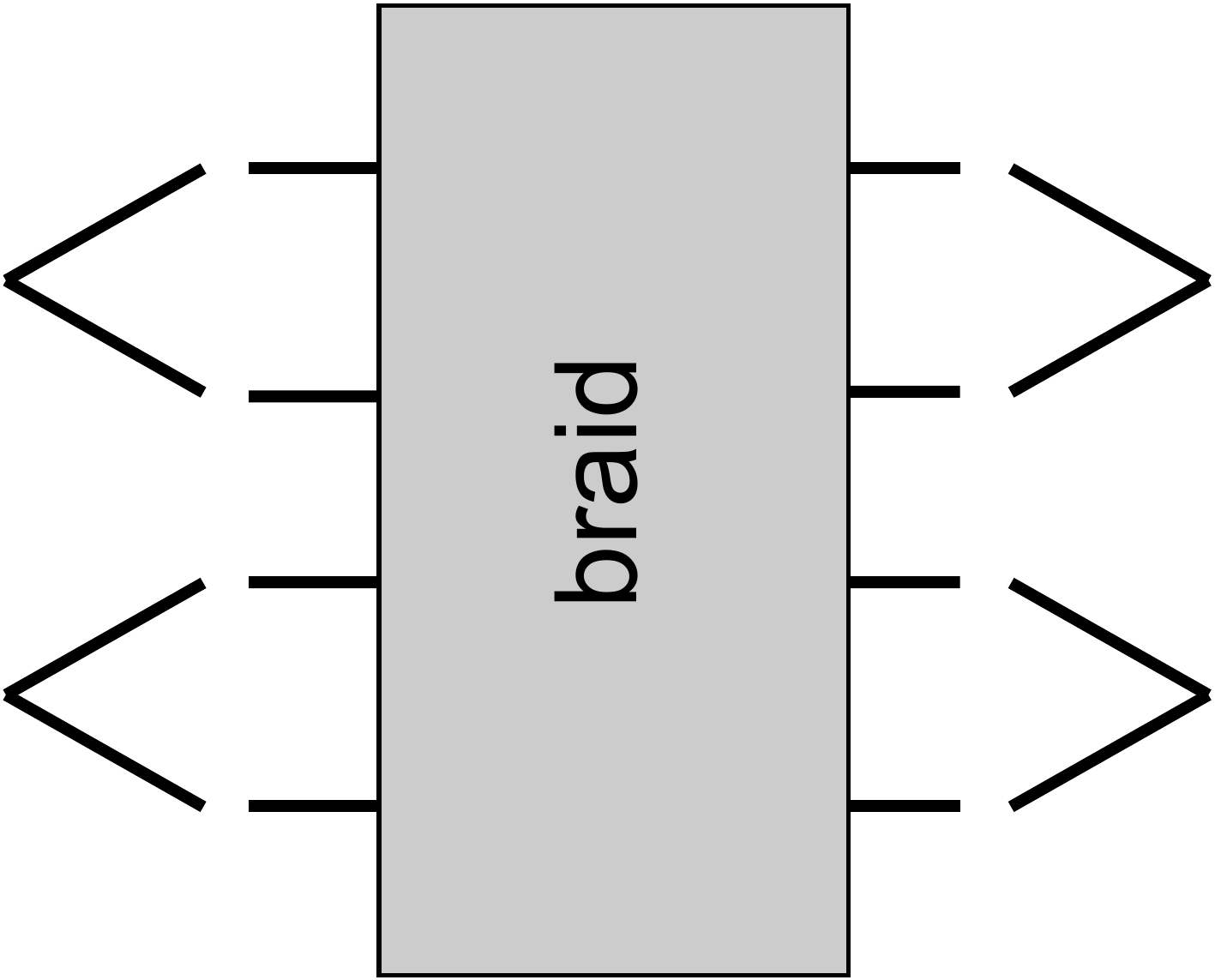}}\qquad\qquad
    \subfigure[]{\includegraphics[angle=90,width=.3\textwidth]{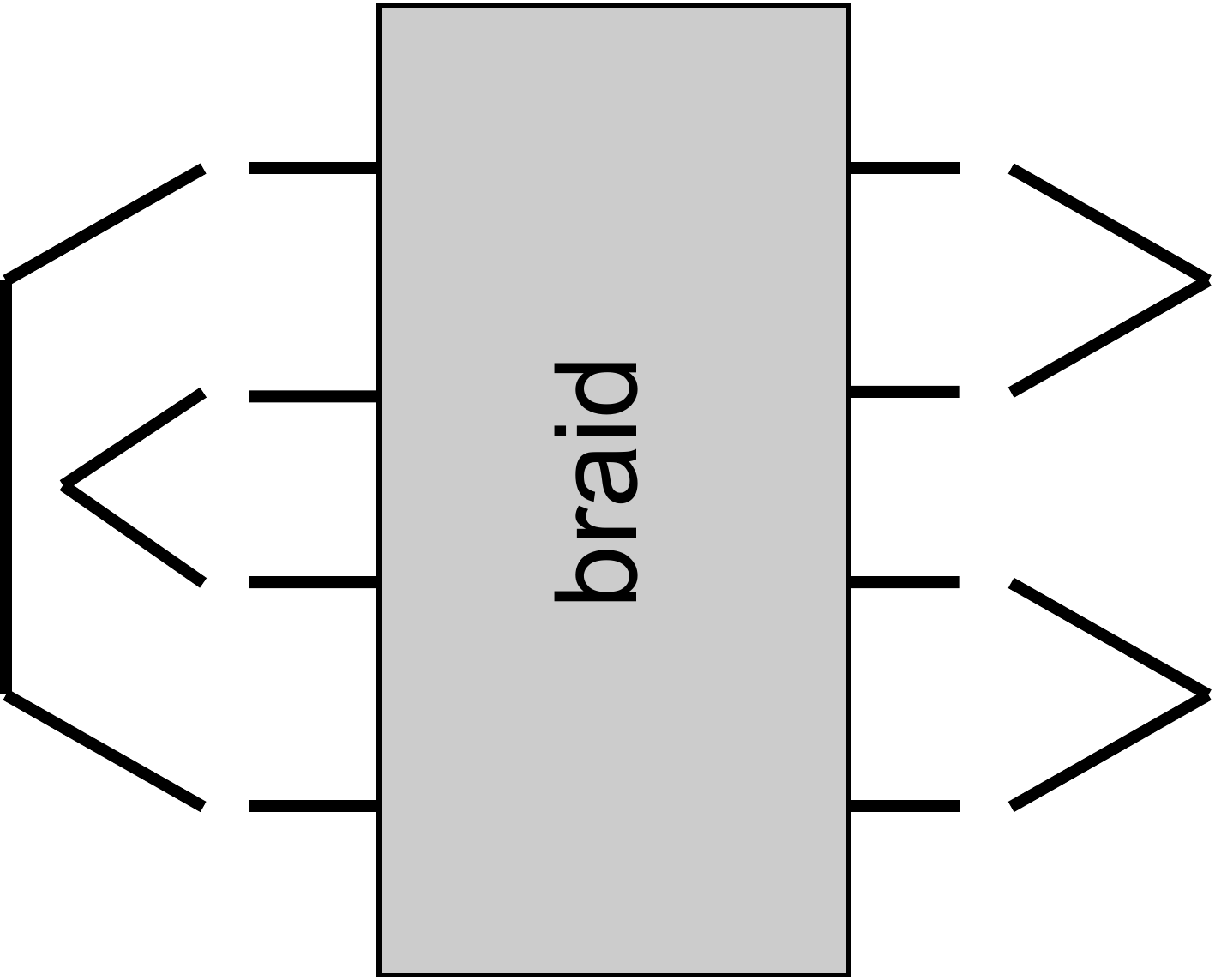}}
    \caption{Braid word presentation of the 2-bridge knot $L_{n,m}$. On the right end, we always close by neighboring pairs (as in \eq{Rstate}). On the left end, the closure prescription depends on what is the leftmost braiding operation in the word: if it is given by $\sigma_2$, which acts on the second and third strands, we must close as in (a) since it is the only non-trivial option; if it is given by $\sigma_1$, which acts on the first pair, the only option is to close as in (b).}
    \label{fig:close}
\end{figure}

In terms of $P\SL$ generators, the $B_3$ word defining the link $L_{n,m}$ is 
\begin{equation}\label{braidwordST}
   U_{n,m}=\sigma_2^{a_1}\sigma_1^{a_2}\sigma_2^{a_3}\cdots = ST^{a_1}ST^{a_2}ST^{a_3}S\cdots\,,
\end{equation}
which has nearly the same presentation as before (c.f. \eq{Unmword}). More precisely, they coincide when $a_1=0$. Note that so far nothing has been said about the particular continued fraction that defines the $a_i$ (hence $L_{m,n}$). An interesting theorem of Schubert \cite{Murasugi,Schubert1956} proves that the isotopy class of unoriented 2-bridge links actually does not depend on the particular choice of continued fraction -- they all give the same link. Therefore, it follows naturally from Proposition \ref{prop1} in the previous section that (for $m>0$) the shortest braid word presentation of a given $L_{n,m}$ is obtained for the regular continued fraction, since it minimizes the $ST$-word in \eq{braidwordST}. In other words, the topological complexity $C_{n,m}$ (i.e., the minimal number of $B_3$ generators needed to represent the link) associated with the unitary operation of \eq{braidcircuit} for an arbitrary 2-bridge link $L_{n,m}$ with $m>0$ is simply
\begin{align}
C_{n,m} = \min_{\{a_i\}}\sum_{i=1}^r|a_i| = \sum_{i=1}^r b_i
\end{align}
where $b_i$ are the Euclidean CF coefficients of $\tfrac{n}{m}$.

It is worth mentioning here that torus knots of the type $(2,2p+1)$ ($p>0$) are included in the rational family. Since $\tfrac{2}{2p+1}=[0;p,2]$, the minimal braid word obtained from the prescription above is simply $U_{2,2p+1}=\sigma_2^p\sigma_1^2=ST^pST^2$. The corresponding complexity is $C_{2,p}=p+2$ (in terms of $B_3$ generators), which is equivalent to $p+4$ generators of $P\SL$. Notice that this is compatible with expression \eq{Cnm} for the complexity of torus knot states studied before, the only subtlety being the $2$ units of framing ($f=2$) that appear naturally here in the language of 2-bridge knots.

\section{Relation with path integral optimization}
\label{sec:pathintegral}

The goal of this section is to connect the knot complexity derived in this paper with the framework of path integral optimization \cite{Caputa:2017urj,Caputa:2017yrh,Bhattacharyya:2018wym,Takayanagi:2018pml}.

In \cite{Caputa:2017yrh}, the authors consider a 2d lattice regularization of a path integral generating a certain target quantum state. The size of the sites may be varied to maximally simplify the numerical computation, and this procedure is shown to be equivalent to minimizing a certain complexity functional associated to the target quantum state. Equivalently, we can keep the lattice coordinate size fixed and act on the metric that, in 2d, can always be brought \emph{locally} to the form $g \xrightarrow[\text{Diff}]{}e^{2\phi}\hat{g}$, for some reference metric $\hat{g}$. This implies that complexity minimization is equivalent to choosing an appropriate Weyl rescaling. We want to show here that large gauge transformations that act on the torus complex structure $\tau$ should also be included in the optimization procedure whenever the space topology is non-trivial. In this way, we provide a more general framework for discussing path integral complexity and establish a clear connection with the specific discussion in the previous sections.

We begin by considering a generic 2d CFT covariantly coupled to a background metric $g$. This metric has the role of defining the lattice size and is not a dynamical field. The path integral we are interested in is the Euclidean time evolution on the plane from $-\infty$ up to a certain time $t_0$, and we impose some boundary conditions for the metric at $t_0$ fixing the lattice size to be $\epsilon$:
\be \label{bc}
ds^2=e^{2\phi(x,t)}(dt^2+dx^2), \hspace{1cm} e^{2\phi(x,t_0)}=\frac{1}{\epsilon^2}\,.
\ee 
The key point here is that the integration measure on a generic field $\varphi$ is \emph{not} invariant under conformal transformations $g\rightarrow e^{2\phi}g$, while it is invariant under diffeomorphisms. This happens because of the requirement to have Gaussian integral normalization:
\be 
\int \mathcal{D}_g \delta \varphi \;\text{Exp}{\int_{\Sigma}d^2 z\sqrt{g}\delta \varphi\cdot \delta\varphi}=1.
\ee 
The integration measure is then  sensitive to a Weyl rescaling such that
\be 
\mathcal{D}_{e^{2\phi}g} \delta \varphi=e^{S_L(\phi)}\mathcal{D}_{g} \delta \varphi
\ee
with $S_L$ the Liouville action \cite{Polyakov:1981rd}
\be \label{liou}
S_L(\phi)=\frac{c}{24\pi}\int_{M}d^2 \sigma (\sqrt{g}\hat{g}^{ab}\partial_a\phi \partial_b\phi +\mu e^{2\phi})\,.
\ee 
Because of this anomaly, the state $\Psi$ prepared by path integration transforms with an overall normalization depending on the choice of $\phi$, and this additional factor is defined as the (exponential of the) complexity of $\Psi$:
\be 
\Psi_{e^{2\phi}g}=e^{S_L(\phi)-S_L(0)}\Psi_{g}\,.
\ee
In the context of AdS/CFT, the metric is interpreted as being induced on the plane by embedding it into $\text{AdS}_3$, with the state $\Psi$ defined at the conformal boundary of AdS space. 

A generalization of this discussion was introduced in \cite{Takayanagi:2018pml}, leading to the following formula
\be\label{pi}
\int D\varphi(x)\  {\rm e}^{-S_{M_{\Sigma}}[\varphi]}\delta\big(\varphi(x,t_0)-\varphi_\Sigma(x)\big) \ = \ {\rm e}^{C_M}\Psi_\Sigma\,.
\ee
Here the constant time surface $\Sigma=\partial M_{\Sigma}$, on which the state $\Psi$ is defined by path integration over $M_{\Sigma}$, does not need to belong to the conformal boundary, but following the surface-state correspondence of \cite{Miyaji:2015yva} can be any convex codimension two surface in (here Euclidean) AdS. The state $\Psi_{\Sigma}$ is argued to be independent of the actual shape of $M_{\Sigma}$ except for the overall normalization. The complexity $C_M$ is then holographically computed as the value of the on-shell gravitational action restricted to the bulk region $N_{\Sigma}$, $I_G(N_{\Sigma})$, where $N_{\Sigma}$ is bounded by $M_{\Sigma}$ and the $t_0$-slice containing $\Sigma$. It is then shown in \cite{Takayanagi:2018pml} that $I_G(N_{\Sigma})=S_L(\phi)$, for $\phi$ being the local cutoff (or, quite equivalently, the Weyl rescaling, see \cite{Takayanagi:2018pml} for details) that brings the metric on $M_{\Sigma}$ to the form (\ref{bc}), showing that the two definitions of complexity agree. 

A further generalization can be obtained by adding a surface $\Sigma_i$ providing initial conditions for the state at $\Sigma$. The path integration along $\tilde{M}_{\Sigma_i\Sigma}$ produces the time evolution $U(\Sigma,\Sigma_i)$ from $\Sigma_i$ to $\Sigma$ that can be used to glue the $M_{\Sigma_i}$ and $M_\Sigma$ surfaces. The total associated complexity can be written as:
\be 
\braket{\Psi_{\Sigma}|U(\Sigma,\Sigma_i)|\Psi_{\Sigma_i}}=e^{C(\Psi_{\Sigma_i})+C(\Psi_{\Sigma})+C(M_{\Sigma_i\Sigma})}\,.
\ee

Let us now move a step forward and consider the case where $M$ does not have a trivial topology, as it was  considered for the disk topology so far. For example, we could pick $\tilde{M}_{\Sigma_i\Sigma}$ to be a cylinder and directly join the two boundaries $\Sigma=\Sigma_i$. By translational invariance along $\Sigma$ and $\Sigma_i$, we can even twist one of the boundaries before gluing and obtain a generic torus $T^2$. When the spacetime topology of $M$ is not trivial, $g \xrightarrow[\text{Diff}]{}e^{2\phi}\hat{g}$ is no longer valid \emph{globally}: the orbit $\mathcal{C}_g$ generated by the diffeomorphism transformation does not cover the whole metric space, which should then be seen as a fiber bundle with fiber $\mathcal{C}_g$ and base the moduli space $\mathcal{M}$. The real dimension of this moduli space is determined by the Riemann-Roch theorem to be, for negative Euler number, $\text{dim}\left(\mathcal{M}\right)=6g-6+3b+2o$, with $g$ the genus, $b$ the number of boundaries and $o$ the number of operator insertions (punctures) on the surface $M$. For zero Euler number, such as a torus or a cylinder, we have two and one moduli respectively (and an equal number of conformal killing vectors)\footnote{There is a small caveat involved in counting the total moduli that comes from the boundary condition (\ref{bc}): this fixes the size of the boundary circle $\Sigma$ so that a modulus less should be included. For instance if we had considered a cylinder, with boundaries $\Sigma_i$ and $\Sigma$ as before, the counting would give zero moduli even if a real modulus would have to be included if we had not imposed (\ref{bc}).
}. In general then $\hat{g}$ depends on the coordinates $\tau$ of $\mathcal{M}$, so that
\be\label{diff}
g \xrightarrow[\text{Diff}]{}e^{2\phi}\hat{g}(\tau).
\ee 
For instance, if we were to integrate over $g$ the integration measure would then split as
\be 
\mathcal{D}g=\text{Jac}\; d\tau \mathcal{D}\phi \mathcal{D}\xi \mathcal{D}\bar{\xi}\,,
\ee
where $\xi,\bar{\xi}$ are local 1-forms parametrizing the diffeomorphism transformation $\delta g_{zz}=\nabla_z \xi_z$ and its complex conjugate. The complexity now should be minimized both along the fiber $\mathcal{C}_g$ and the moduli space base $\mathcal{M}$, so that we should choose the local Weyl transformation \emph{and} the point in moduli space $\tau$. Following the same steps leading to (\ref{liou}), the functional becomes \cite{Ginsparg:1993is}
\be \label{lioutau}
S_L(\phi,\tau)=\frac{c}{24\pi}\int_{M}d^2 \sigma \left(\sqrt{g(\tau)}g(\tau)^{ab}\partial_a\phi \partial_b\phi+2 R(\tau) \phi+\mu e^{2\phi}\right) +\frac{c}{24\pi}\int_{\Sigma}d\sigma \sqrt{h(\tau)} K(\tau) \phi\,,
\ee 
where we have denoted by $R$ the Ricci scalar while $K$ and $h$ are respectively the trace of the extrinsic curvature and the induced metric on $\Sigma$. Equivalently we can invoke holography and obtain the same expression as the gravity on-shell partition function restricted to $N$, which in the case of $M=T^2$ is just the torus interior.

The final piece of information we need is the effect of inserting some primary operator $\mathcal{O}(x_0)$ with scaling dimension $h$ on the surface $M$. Considering that after a Weyl transformation on $\mathcal{O}(x_0)$ this scales with weight $h$, this simply implies that the total complexity should be modified as $S_L-2h \phi(x_0)$.

To finally make connection with the knot complexity, we should ask  the following: what is the effect of a path integral optimization on the Wilson loop that wraps the knot? Note that we have in mind here the realization of 3d gravity as the difference of two $SL(2,\mathbb{R})$ Chern-Simons Lagrangians. The situation here is slightly different from the operator insertion since the Wilson loop is, in our framework, inserted inside the bulk region $N$ and not on its boundary $M$. However, the philosophy is the same: we should vary over the Weyl field $\phi$ and the complex structure $\tau$ in order to optimize the path integration on $M$ and compute the effect on the complexity. As the complexity computed by holography is the saddle point of the gravitational action $-I_G(N)$ restricted to the bulk region $N$, including a bulk Wilson loop just amounts to compute the saddle point for the gravity partition function \emph{with} the Wilson loop insertion; formally, after considering backreaction,
\be \label{prob}
e^{C_M}\sim \int  \mathcal{P}e^{\oint A}e^{-I_G(N)}|_{saddle}.
\ee
The minimization of this quantity is achieved by acting on $M$ by Weyl rescaling, which does not affect the Wilson loop due to the topological nature of the theory, as well as large gauge transformations, that indeed act changing $\tau$. This connects the path integral optimization proposal to our discussion of the complexity of TQFT states. From this argument, it is not clear \emph{a priori} whether we should obtain the same complexity functional for knots that we have derived in this paper. It is an interesting open problem to work out in detail (\ref{prob}) and see what kind of complexity functional it leads to. We leave this for a future work.

\section{Conclusions}
\label{sec:conclusions}

Our main motivation in this paper was to use topological theories to bridge the notions of complexity that have recently been proposed in the quantum gravity and quantum field theory context with more familiar notions in computer science. This is due in part to the finite-dimensional nature of the Hilbert space of such theories in the case of compact gauge groups, which makes the problem tractable, as well as to the topological nature of gravity in $3d$. As a first step, we have focused on $3d$ Chern-Simons theory with compact gauge group and defined the complexity for states associated with torus knots and some connected examples. The key player in our game was the group $P\SL$ and its unitary representations, which naturally generate the quantum evolution of Chern-Simons states in terms of elementary quantum gates $\Sc$ and $\Tc$. 

Since the unitary representations of $P\SL$ are not in general faithful, we distinguished the circuit complexity of quantum knot states $\Cc$ (defined in terms of $\Sc$ and $\Tc$) and the topological complexity $C$ (in terms of $S$ and $T$), which is a characterization of the knots themselves. Topological complexity sets an upper bound for quantum complexity. This bound is saturated in the semiclassical limit of the Chern-Simons theory, in which the correspondence between the knots and the states becomes exact. We have found that in this limit the standard textbook notion of circuit complexity as the minimum number of gates necessary to produce a desired target state from a given reference has a number of interesting properties.

The topological complexity of a state corresponding to a $(n,m)$ torus knot (relative to the unknot) is the length of the shortest word of $S$ and $T$ generators yielding the desired torus diffeomorphism. It can be computed exactly in terms of the particular continued fraction decomposition $[b_1,\ldots,b_r]$ of the rational number $\tfrac{n}{m}$ that has only positive coefficients -- the so-called regular or Euclidean continued fraction. Such presentation is unique up to a trivial ambiguity in the last term and corresponds to the Euclidean algorithm of computing the greatest common divisor of $n$ and $m$. This is the main result proven in Proposition~\ref{prop1}. The result for the topological complexity ${C}_{n,m}$ is given by equation~\eq{Cnm}. 

Interestingly, the topological complexity computed for the first few thousand torus knots grows at a much slower rate when compared with their crossing number. This is consistent with the idea that torus knots are intrinsically simpler than other knots with the same number of crossings. Also, for all our checked examples, \eq{Cnm} calculated using the Euclidean continued fraction gives the same result as the one computed with the ancestral path continued fraction, introduced in \cite{beardon2012} as an explicit example of continued fraction having the minimal possible number of coefficients. The latter defines a geodesic path on the Farey graph connecting $\infty=\tfrac{1}{0}$ and $\tfrac{n}{m}$, but \emph{a priori} has no reason to minimize the $\Sc\Tc$-word. Hence, the complexity \eq{Cnm} might play a deeper role from the point of view of number theory and the modular group as an index characterizing geodesic continued fraction presentations. The conjectured relation was formulated as Proposition~\ref{prop2}. Finally, we numerically observed that the complexity grows roughly as $C_{n,m}\sim \tfrac{n}{m}$ for large $n$.

We also discussed interesting geometric interpretations of the topological complexity formula in the  Farey tesselation of the hyperbolic plane, which could be relevant for understanding the connection of the present discussion with quantum gravity and its tensor network models. The sum of Euclidean continued fraction coefficients $b_i$ is itself also a \q{geodesic distance} in the upper-half plane between the origin and the fraction $\tfrac{n}{m}$ in the sense of \cite{Kulkarni}. Equivalently, it corresponds to the depth of $\tfrac{n}{m}$ on the Stern-Brocot tree of positive rationals or to the number of ideal triangles traversed in the dual Farey graph. The number $r$ of terms in the continued fraction corresponds to the distance to $\infty$ travelling along a special path of Farey neighbors, which is by itself composed of a series of geodesic arcs. Due to the fact that $r$ is bounded from above for fixed $m$, the main contribution to the complexity for large $n$ is due to the number of triangles. This seems to be well-approximated by the area below the geodesic curves in Figure~\ref{fig:farey}, suggesting a connection of our results with the holographic subregion complexity \cite{Alishahiha2015,Abt2018}. We notice, however, that our discussion is more in the spirit of ``holography in the moduli space'' \cite{Manin2002} rather than the usual spacetime holography.

In Section~\ref{sec:pathintegral}, we embedded our discussion in the wider framework of path integral optimization, recently introduced in \cite{Caputa:2017urj}, by showing how the original argument should be modified for non-trivial topological spaces: it is necessary to also optimize the moduli space of the spacetime. We formally derived the complexity functional in this language and an interesting future problem would be to work out its explicit form and better understand the relationship with our results for the complexity of knot states.

Finally, we also discussed circuit complexity in terms of braid group generators (i.e., using the presentation of knots as closures of braid words) in the case of 2-bridge knots in the Hilbert space of Chern-Simons on the sphere with four punctures. These are also characterized by two coprime numbers and have close connections with continued fractions, which again allows to easily obtain the topological complexity as a shortest word on $B_3$ generators via Proposition \ref{prop1}. As in case of torus knots, the topological complexity is exact but only coincides with the quantum state complexity in the semiclassical limit of Chern-Simons, otherwise setting an upper bound on it.

It is interesting to know whether a general formula for quantum complexity exists beyond the semiclassical limit, but a case by case study is needed here since we are not aware of any general statement about modular representations with arbitrary gauge group $G$ and level $k$. This is related to understanding the structure of unitary representations of mapping class groups, see for example, a recent work~\cite{Singerman:2018}. Possible generalizations of quantum complexity formulas, as well as the case of more general classes of knots shall be addressed in a forthcoming work \cite{ongoing}.

\subsection*{Acknowledgements}
The authors thank F.~Ares, T.~Fleury, M.~Lencsés and A.~Mehra for useful discussions. GC is also grateful to Ian Short for insightful correspondence on continued fractions. GC, FN and AP would like to thank the financial support from the Brazilian ministries MEC and MCTIC. The work of DM was supported by the Russian Science Foundation grant No 18-71-10073.

\appendix{}

\section{Torus knot states}
\label{sec:appendix}

In this section we review some facts about the torus knot states and explain the difference between the type states considered, for example, in~\cite{Labastida:1995kf} and the complement states of reference~\cite{Balasubramanian:2016sro}.

As mentioned in section~\ref{sec:CS}, TQFT assigns a torus some Hilbert space $\Hc(T^2)$. In the case of CS TQFT we parameterize it additionally with the group and level: $\Hc(T^2;G,k)$. States in this Hilbert space are three-dimensional spaces $M$ attached to $T^2$, such that torus is their boundary $\partial M=T^2$. Canonical choice of basis on $\Hc(T^2;G,k)$ corresponds to considering solid tori with Wilson lines parallel to the longitude of the torus (unknots) and coloring them with integrable representations of $G$. Hence, any state in the Hilbert space can be expanded over the basis of vectors $|i\rangle$, with $i=1,\ldots,N$ labeling integrable representations,
\be
|\Psi\rangle \ = \ \sum\limits_{i=1}^N\psi_i\,|i\rangle\,.
\ee
We note that the basis $|i\rangle$ is orthonormal, which follows from the scalar product
\be
\langle i|j\rangle \ =\  Z(S^2\times S^1;\bar{i},j) \ = \  \delta_{ij}\,,
\ee
which is nothing but the topological invariant (Chern-Simons partition function) of two unlinked circles in $S^2\times S^1$ parallel to $S^1$ and colored by representations $j$ and the conjugate of representation $i$. Such an invariant is unity if $i=j$, and zero otherwise.

One set of states in $\Hc(T^2)$ to consider are those represented by solid tori containing torus knots inside $|j_{m,n}\rangle$. Expanding over the basis means gluing such a torus with a solid torus with an unknot Wilson line inside. The second torus must be inverted, to represent the dual space $\Hc^\ast(T^2)$, and the representation of the unknot should be conjugated. The expansion coefficients are then
\be
\psi_i(j_{m,n}) \ = \ \langle i|j_{m,n}\rangle = Z(S^2\times S^1;\bar{i}_{1,0},j_{m,n})\,, 
\ee
which are the $S^2\times S^1$ invariants of a torus knot in representation $j$ and an unknot colored with $\bar{i}$, both parallel to $S^1$ and unlinked. In particular, the coefficient with trivial $i=\emptyset$ gives an invariant of the torus knot in $S^2\times S^1$.

Recall that if one wants to compute the invariants of torus knots in $S^3$, one should take the solid torus with a knot ${(m,n)}$ and glue it with a complement of this solid in $S^3$, which is some vector $\langle \Omega|\in \Hc^\ast(T^2)$. The complement is also an inverted torus, but with an opposite identification of the contractible and non-contractible cycles. Hence,
\be
\label{Omega}
\langle\Omega| \ = \ \sum\limits_i \langle i|\Sc_{0i}\,,
\ee
where $\Sc_{ij}$ is a representation of the modular matrix $S$ on $\Hc(T^2)$. Therefore,
\be
Z(S^3;j_{m,n}) \ = \ \langle\Omega|j_{m,n}\rangle \ = \ \sum\limits_i \Sc_{0i} Z(S^2\times S^1;\bar{i}_{1,0},j_{m,n})\,.
\ee

Equation~(\ref{jnm}) in section~\ref{sec:modular} tells that a torus knot $(m,n)$ inside a solid torus can be obtained from an unknot by a modular transformation. In terms of the fundamental cycles $\alpha$ and $\beta$ on the torus, the unknot corresponds to $\omega=\beta$, while an arbitrary torus knot is $\omega=n\alpha+ m\beta$. These two linear combinations of fundamental cycles can be connected by the modular group element
\be
U^{m,n}\ = \ \left(\begin{array}{cc}
 m & p \\
 n & q 
\end{array}
\right)
\ee
Note that this can also be viewed as a M\"obius transformation
\be
z\ \to \ \frac{mz+p}{nz+q}\,, 
\ee
of the \q{rational number} $1/0=\infty$ into the number $m/n$. If one has a unitary representation of $\SL$ acting on $\Hc(T^2)$, then any torus knot state can be obtained from the unknot by an appropriate $\SL$ transformation:
\be
|j_{m,n}\rangle \ = \  \Uc^{m,n}|j_{1,0}\rangle \ = \ \sum\limits_k  \Uc^{m,n}_{kj}|k\rangle\,.
\ee
Since the representation is unitary, states $|j_{m,n}\rangle$ also form a basis. Torus knot invariants can be expressed as matrix elements of $\SL$ elements,
\be
\label{S3pols}
Z(S^3;j_{m,n}) \ = \ \langle\Omega|\,\Uc^{m,n}|j_{1,0}\rangle \ = \ \sum\limits_i \langle i|\Sc_{0i}\sum\limits_k  \Uc^{m,n}_{kj}|k\rangle \ = \  \sum\limits_k  \Sc_{0k}\Uc^{m,n}_{kj}
\ee

One might also be interested in states which have coefficients~(\ref{S3pols}) as the amplitudes defining a knot state, that is
\be
|\Kc_{m,n}\rangle \ = \ \sum\limits_j Z(S^3;j_{m,n}) |j\rangle\,.
\ee
To represent such a state in terms of spaces, it appears more convenient to think of their conjugate versions
\be
\langle\Kc_{m,n}| \ = \ \sum\limits_i \psi_i \langle i|\,.
\ee
Note that $\langle\Omega|$ in equation~(\ref{Omega}) is an example of such a dual state. Its coefficients are invariants $\psi_i=Z(S^3,R_i)$ of an unknot in $S^3$. Now consider a similar state of the form
\be
\sum\limits_i\langle \Kc_{m,n}(i)|\,\Sc_{0i} \ = \ \sum\limits_{i,k}\langle k|\,\Sc_{0i}(\Uc^{m,n})^\dagger_{ik}\,,
\ee
where we expand over a different basis in the dual space, but with the same coefficients as in~(\ref{Omega}). Sandwiching these dual states with basis elements $|j\rangle$ gives the same invariants in $S^3$, albeit complex conjugated. 

What is the space interpretation of states $\langle\Kc_{m,n}|$? They correspond to spaces with a torus boundary, which produce $S^3$ with a torus knot inside, when glued with a solid torus with a Wilson line $j_{1,0}$. Hence such $\langle\Kc_{m,n}|$ are complements of a tubular neighborhood of torus knot $(m,n)$ in $S^3$.

\bibliographystyle{JHEP}
\bibliography{refs}

\end{document}